\newtheorem{theorem}{Theorem}[section]
\newtheorem{lemma}[theorem]{Lemma}
\newtheorem{claim}[theorem]{Claim}
\newtheorem{corollary}[theorem]{Corollary}
\theoremstyle{definition}
\newtheorem{definition}[theorem]{Definition}
\newcommand{\cF}{{\mathcal{F}}}
\newcommand{\cg}{G_{\cF_0}}
\newcommand{\pw}{\ensuremath{\mathtt{pw}}\xspace}
\newcommand{\poly}{\mathrm{poly}}
\newcommand{\dist}{\mathrm{dist}}
\newcommand{\Oh}{\ensuremath{\mathcal{O}}}
\def\cqedsymbol{\ifmmode$\lrcorner$\else{\unskip\nobreak\hfil
\penalty50\hskip1em\null\nobreak\hfil$\lrcorner$
\parfillskip=0pt\finalhyphendemerits=0\endgraf}\fi}
\newcommand{\executeiffilenewer}[3]{%
\ifnum\pdfstrcmp{\pdffilemoddate{#1}}%
{\pdffilemoddate{#2}}>0%
{\immediate\write18{#3}}\fi%
} 
\newcommand{%
\executeiffilenewer{figures/.svg}{figures/.pdf}%
{inkscape -z -D --file=figures/.svg %
--export-pdf=figures/.pdf --export-latex}%
{\input{figures/.pdf_tex}}}[1]{%
\executeiffilenewer{figures/#1.svg}{figures/#1.pdf}%
{inkscape -z -D --file=figures/#1.svg %
--export-pdf=figures/#1.pdf --export-latex}%
{\input{figures/#1.pdf_tex}}}%
\newcommand{\defproblem}[3]{
  \vspace{1.5mm}
\noindent\fbox{
  \begin{minipage}{16cm}
  #1 \\
  {\bf{Input:}} #2  \\
  {\bf{Question:}} #3
  \end{minipage}
  }
  \vspace{1.5mm}
}
\newcommand{\defproblemG}[3]{
  \vspace{2mm}
\noindent\fbox{
  \begin{minipage}{16cm}
  #1 \\
  {\bf{Input:}} #2  \\
  {\bf{Goal:}} #3
  \end{minipage}
  }
  \vspace{2mm}
}
\newcommand{\pathdecomp}{\mathbb{P}\xspace}
\DeclareMathAlphabet{\mathcal}{OMS}{cmsy}{m}{n}
\newcommand{\MC}{{\textsc{Multicolored Clique}}\xspace}
\newcommand{\kSP}{{\textsc{$k$-Set Packing}}\xspace}
\newcommand{\SP}{{\textsc{Set Packing}}\xspace}
\newcommand{\tSP}{{\textsc{$3$-Set Packing}}\xspace}
\newcommand{\tDM}{{\textsc{$3$-Dimensional Matching}}\xspace}
\newcommand{\kDM}{{\textsc{$k$-Dimensional Matching}}\xspace}
\newwrite\tempfile
\newcommand{\writeref}[1]{\immediate\write\tempfile{\unexpanded{#1}}}
\newcommand{\writerefe}[1]{\immediate\write\tempfile{\expandafter{#1}}}
\LetLtxMacro{\oldref}{\ref}
\LetLtxMacro{\oldsection}{\section}
\renewcommand{\ref}[1]{\oldref{#1}\writeref{\oldref{#1} (#1)}\writeref{}}
\title{Improved approximation for $3$-dimensional matching \\via bounded pathwidth local search\thanks{The preliminary version of this paper was presented at the 54th Annual IEEE Symposium on Foundations of Computer Science (FOCS'13). The author is partially supported by
Foundation for Polish Science grant HOMING PLUS/2012-6/2.}}
\author{Marek Cygan
  \thanks{
    Institute of Informatics, University of Warsaw, Poland,
      \texttt{cygan@mimuw.edu.pl}.
  }
}
\begin{document}

\maketitle

\begin{abstract}
One of the most natural optimization problems is the \kSP problem,
where given a family of sets of size at most $k$ one
should select a maximum size subfamily of pairwise disjoint sets.
A special case of $3$-{\sc Set Packing} is the 
well known \tDM problem, which is a maximum hypermatching problem in $3$-uniform tripartite hypergraphs.
Both problems belong to the Karp's list of $21$ NP-complete problems.
The best known polynomial time approximation ratio for \kSP
is $(k+\epsilon)/2$ and goes back to 
the work of Hurkens and Schrijver~[SIDMA'89], which gives $(1.5+\epsilon)$-approximation for \tDM.
Those results are obtained by a simple local search algorithm,
that uses constant size swaps.


The main result of this paper is a new approach to local search for \kSP
where only a special type of swaps is considered, 
which we call swaps of bounded pathwidth.
We show that for a fixed value of $k$ one can search the space of $r$-size
swaps of constant pathwidth in $c^r \poly(|\cF|)$ time.
Moreover we present an analysis proving that
a local search maximum with respect to $O(\log |\cF|)$-size swaps of
constant pathwidth yields a polynomial time $(k+1+\epsilon)/3$-approximation
algorithm, improving the best known approximation
ratio for \kSP. In particular we improve the approximation
ratio for \tDM from $3/2+\epsilon$ to $4/3+\epsilon$.
\end{abstract}

\section{Introduction}

In the \SP problem, also known as {\sc Hypergraph Matching}, we are given a family $\cF \subseteq 2^U$ 
of subsets of $U$, and the goal is to find a maximum size
subfamily of $\cF$ of pairwise disjoint sets.
\SP is a fundamental problem in combinatorial optimization
with various applications.
A simple reduction from {\sc Independent Set} (where $|\cF|=|V|$)
combined with the hardness result of H{\aa}stad~\cite{haastad}
makes the \SP problem hard to approximate.
When each set of \SP is of size at most $k$
the problem is denoted as \kSP.

\defproblemG{\kSP}{A family $\cF \subseteq 2^U$ of sets of size at most $k$.}
{Find a maximum size subfamily of $\cF$ of pairwise disjoint sets.}

\kSP is a generalization of {\sc Independent Set} in bounded degree graphs,
  as well as $k$-{\sc Dimensional Matching} and is related to plethora of other problems
(see~\cite{lau} for a list of connections between
 \kSP and other combinatorial optimization problems).
In \tDM the universe $U$ is partitioned into $U = X \uplus Y \uplus Z$
and $\cF$ is a subset of $X \times Y \times Z$.

Both \tDM and \SP are 
well studied problems, belonging to Karp's list of 21 NP-hard problems~\cite{karp21}.
A simple greedy algorithm returning any inclusionwise maximal
subfamily of disjoint subsets of $\cF$ gives a $k$-approximation
for \kSP.
One can consider a local search routine, where 
as long as it is possible we remove one set from our current feasible solution
and add two new sets.
We say that such an algorithm uses size $2$ swaps, as two new sets are involved.
It is known that a local search maximum with respect to size $2$ swaps
is a $(k+1)/2$-approximation for \kSP.
If, instead of using swaps of size $2$ we use swaps of 
size $r$ for bigger values of $r$, then the approximation
ratio approaches $k/2$, and that is exactly
the $(k/2+\epsilon)$-approximation algorithm by Hurkens and Schrijver~\cite{hs89}.

Despite significant interest (see Section~\ref{sec:related})
for over $20$ years no improved polynomial time approximation
algorithm was obtained for \kSP, even for the special case of \tDM.
Meanwhile Halld{\'o}rsson~\cite{h95}\footnote{Similar arguments
 were also used by Berman and F{\"u}rer~\cite{berman-furer} for the
   independent set problem in bounded degree graphs.} has shown
that a local search maximum with respect to $\Oh(\log |\cF|)$ size
swaps gives a $(k+2)/3$-approximation, which was recently
improved to $(k+1+\epsilon)/3$~\cite{cgm13}.
Nevertheless enumerating all $\Oh(\log |\cF|)$ size swaps 
takes quasipolynomial time.

\subsection{Our results and techniques}

Based on the work of Halld{\'o}rsson~\cite{h95}
a natural path to transforming a quasipolynomial
time approximation into a polynomial time approximation
would be by designing a~$c^r \poly(|\cF|)$ time algorithm,
where $c$ is a constant.
This is exactly the framework of parameterized complexity\footnote{For further information about parameterized
  complexity we defer the reader to monographs~\cite{fpt1,fpt2,fpt3}.},
where the swap size is a natural parameter.
Unfortunately, we show that this is most likely impossible, i.e. there is no
such algorithm with $f(r) \poly(|\cF|)$ running time, unless W[1]=FPT,
where $f$ is some computable function, even for $k=3$.
We would like to note that W[1]$\neq$FPT is a widely believed assumption,
in particular if W[1]=FPT, then the Exponential Time Hypothesis of~\cite{eth} fails.

\begin{theorem}
\label{thm:w1-intro}
Unless $FPT=W[1]$, there is no $f(r) \poly(|\cF|)$ time algorithm, that given a family $\cF \subseteq 2^U$
of sets of size $3$ and its disjoint subfamily $\cF_0 \subseteq \cF$
either finds a bigger disjoint family $\cF_1 \subseteq \cF$ or verifies
that there is no disjoint family $\cF_1 \subseteq \cF$ such that $|\cF_0 \setminus \cF_1| + |\cF_1 \setminus \cF_0| \le r$,
\end{theorem}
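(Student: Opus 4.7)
The plan is to prove Theorem~\ref{thm:w1-intro} by exhibiting an FPT-reduction from \MC, which is $W[1]$-hard parameterized by the number of color classes $t$. Given an instance consisting of a graph $G$ with vertex partition $V = V_1 \uplus \cdots \uplus V_t$, I would construct in polynomial time a family $\cF$ of $3$-element sets together with a disjoint subfamily $\cF_0 \subseteq \cF$ and a parameter $r = f(t) = \Theta(t^2)$ so that there is an improving swap of size at most $r$ if and only if $G$ contains a multicolored $t$-clique.

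A natural design uses three kinds of tokens: a color token $c_i$ per color, a pair token $p_{ij}$ per unordered pair $\{i,j\}$, and slot tokens $\sigma_v^{(j)}$ for every vertex $v \in V_i$ and foreign color $j \neq i$. In $\cF \setminus \cF_0$ I would put a vertex-selection set $S_v = \{c_i, \alpha_v, \beta_v\}$ for every $v$ (with private $\alpha_v, \beta_v$), and an edge-selection set $E_{uv} = \{\sigma_u^{(j)}, \sigma_v^{(i)}, p_{ij}\}$ for every edge $uv$ crossing colors $V_i, V_j$. The base family $\cF_0$ consists of carefully sized \emph{holder} sets whose role is to initially occupy the slot tokens (and possibly the color and pair tokens), so that selection sets cannot be added without first removing the corresponding holders. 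The intended improving swap for a clique $K = \{v_1,\ldots,v_t\}$ removes exactly the holders blocking the $t$ selected $S_{v_i}$ and the $\binom{t}{2}$ selected $E_{v_i v_j}$, and inserts those $t + \binom{t}{2}$ selection sets; I would choose the cardinalities and layout of the holders so that this clique swap has total size $\Theta(t^2)$ and strictly positive net count $|A|-|R|$.

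The main technical hurdle is the converse direction: ruling out improving swaps that do not encode a clique. Token sharing immediately enforces three structural constraints on any swap $(R,A)$: at most one vertex-selection per color can belong to $A$ (via $c_i$), at most one edge-selection per color pair (via $p_{ij}$), and an edge-selection $E_{uv}$ can only coexist with $\cF_0 \setminus R$ once the slot tokens $\sigma_u^{(j)}$ and $\sigma_v^{(i)}$ have been freed, coupling the chosen edges to the chosen vertices. The hardest step is an amortized counting argument balancing the cost of holder removals against the gain from selection additions and showing that every partial configuration --- one involving fewer than $t$ colors, or one picking an edge whose endpoints are not consistently selected --- yields non-positive net count. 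Provided this accounting goes through, the reduction is FPT because $r$ depends only on $t$, and the theorem follows from the $W[1]$-hardness of \MC.
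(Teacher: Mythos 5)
Your high-level strategy --- an FPT-reduction from \MC with vertex-selection sets guarded by color tokens, edge-selection sets guarded by pair tokens, and a blocking subfamily $\cF_0$ --- is in the same family as the paper's reduction, but the proposal misses the central difficulty that the theorem is actually about. The theorem rules out \emph{permissive} local search: the hypothetical algorithm is allowed to return \emph{any} bigger disjoint family, with no bound on its distance from $\cF_0$. Your reduction only aims for ``there is an improving swap of size at most $r$ iff $G$ has a multicolored clique,'' which would let the algorithm escape by outputting a large improvement unrelated to any clique. To close this, the paper makes $|\cF_0| = |U|/3 - 1$, i.e.\ exactly one set short of a perfect packing, so that \emph{any} bigger disjoint family is a perfect packing of $U$, and then proves separately that every perfect packing encodes a clique. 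Your holder-based design gives no reason why large, non-clique improvements cannot exist, and without something like the one-short-of-perfect property the reduction does not contradict the existence of a permissive algorithm.

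There is a second, related gap in the gadgetry. As described, $S_v = \{c_i,\alpha_v,\beta_v\}$ with private $\alpha_v,\beta_v$ touches $\cF_0$ only through the color token, so nothing in the construction forces the slot tokens $\sigma_u^{(j)}$ to become available \emph{only when} $S_u$ is selected; one could just remove the holders of those slots directly, and your amortized accounting has no stated mechanism to make that unprofitable. The paper's solution is the $(x,h)$- and $(v,h)$-amplifier gadgets: binary-tree families with two complementary perfect sub-packings, so that freeing the single root element $v_1$ (by selecting the vertex) flips the amplifier's phase at zero net cost and releases exactly the $k$ leaf elements $v_{k+i}$ that the edge sets need, while any other way of releasing a leaf strictly loses. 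These amplifiers simultaneously provide the vertex-to-edge coupling and the global ``one uncovered element'' budget that makes both directions of the equivalence work. Without them (or an equivalent device), the converse counting argument you defer to cannot go through.
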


Therefore trying to find a $c^r \poly(|\cF|)$ time algorithm which searches the whole
$r$-size swaps space is not the proper path.
For this reason we introduce a notion of swaps (also called improving sets)
of bounded pathwidth (see Section~\ref{ssec:improving}).
Intuitively a size $r$ swap is of bounded pathwidth,
if the bipartite graph where vertices represent sets that are added and removed,
and edges correspond to  non-empty intersections, is of constant pathwidth.
Using the color-coding technique of Alon et al.~\cite{color-coding}
we show that one can search the space of swaps of size at most $r$ of bounded pathwidth in $c^r \poly(|\cF|)$ time, for a constant $c$.
As the currently best known analysis of local search maximum with respect to logarithmic size swaps of~\cite{cgm13}
relies on swaps of unbounded pathwidth,
we need to develop a different proof strategy, and the core part of
it is contained in Lemma~\ref{lem:improving-tree}.
The algorithm and its analysis complete the main result of this paper,
that is a polynomial time $(k+1+\epsilon)/3$-approximation algorithm,
for any fixed $k$ and~$\epsilon$.

\begin{theorem}
\label{thm:main}
For any $\epsilon > 0$ and any integer $k \ge 3$
there is a polynomial time $(k+1+\epsilon)/3$-approximation algorithm
for \kSP.
\end{theorem}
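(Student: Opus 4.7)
The plan is to combine the color-coding swap-search routine with the structural result (Lemma~\ref{lem:improving-tree}) in a standard local-search loop. Fix $\epsilon > 0$ and $k \ge 3$, and let the constants $p$ (pathwidth bound) and $\alpha$ be the ones dictated by Lemma~\ref{lem:improving-tree} so that $\alpha \log |\cF|$ is the required swap size and $p$ the required pathwidth. Begin with any inclusionwise-maximal disjoint subfamily $\cF_0 \subseteq \cF$, e.g.\ one produced greedily. Then iterate the following step: invoke the color-coding-based search procedure to decide whether there exists an improving swap with respect to $\cF_0$ of pathwidth at most $p$ and of size at most $r := \alpha \log |\cF|$; if yes, apply it and update $\cF_0$; if no, halt and output $\cF_0$.

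For the running time, note that each applied swap strictly increases $|\cF_0|$ by at least one, so the loop terminates after at most $|\cF|$ iterations. Each iteration costs $c^{r}\poly(|\cF|) = c^{\alpha \log |\cF|}\poly(|\cF|) = \poly(|\cF|)$ time by the bounded-pathwidth search algorithm established earlier in the paper, the constants $c$, $p$, $\alpha$ depending only on $k$ and $\epsilon$. Hence the overall running time is polynomial in $|\cF|$ for every fixed $k$ and $\epsilon$.

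For the approximation ratio, let $\cF^*$ be an optimum solution. Because the algorithm halts only when there is no improving swap of pathwidth $\le p$ and size $\le r$, the structural lemma (Lemma~\ref{lem:improving-tree}) applies to the pair $(\cF_0, \cF^*)$ and yields directly $|\cF^*| \le \frac{k+1+\epsilon}{3} |\cF_0|$, which is the desired approximation bound. Combining the polynomial running time with this inequality proves Theorem~\ref{thm:main}.

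The main obstacle, and the reason this theorem is nontrivial, is the approximation analysis encapsulated in Lemma~\ref{lem:improving-tree}. The argument of \cite{cgm13} guarantees an improving swap of logarithmic size but of unbounded pathwidth, which is useless for us since Theorem~\ref{thm:w1-intro} rules out searching over such swaps. The delicate point is therefore to show that whenever $|\cF^*|$ exceeds $\frac{k+1+\epsilon}{3}|\cF_0|$, one can extract from the conflict bipartite graph between $\cF_0$ and $\cF^*$ an improving subfamily whose bipartite intersection structure has pathwidth bounded by a constant depending only on $k$ and $\epsilon$, while keeping its size $O(\log |\cF|)$. Once this structural statement is in hand, the theorem follows by assembling the pieces as above.
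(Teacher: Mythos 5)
Your assembly is the same as the paper's: iterate the bounded-pathwidth, logarithmic-size swap search of Theorem~\ref{thm:search} until no improving set exists, with termination after at most $|\cF|$ iterations and polynomial cost per iteration, and then invoke the structural guarantee on the resulting local maximum. The one inaccuracy is the attribution: Lemma~\ref{lem:improving-tree} is an abstract statement about trees in edge-labelled multigraphs of minimum degree three and does not by itself yield $|\cF^*|\le ((k+1)/3+\epsilon)|\cF_0|$; the bound you need is Lemma~\ref{lem:analysis}, in whose proof (degree-based partition of $OPT\setminus \cF_0$, iterative removal of degree-one vertices, and the edge-counting of Claim~\ref{claim:new}) Lemma~\ref{lem:improving-tree} enters only indirectly, via Corollary~\ref{cor:decomposition}. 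Since your final paragraph correctly states the structural fact actually required, this is a citation slip rather than a conceptual gap.
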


We believe that the usage of parameterized tools
such as color-coding, pathwidth and W[1]-hardness in the
setting of this work is interesting on its own,
as to the best of our knowledge such tools
have not been previously used in local search based approximation algorithms.

\subsection{Related work}
\label{sec:related}

Even though there was no improvement 
in terms of polynomial time approximation of \kSP (and \tDM)
since the work of Hurkens and Schrijver~\cite{hs89},
both problems are well studied.

One can also consider weighted variant of \kSP,
where we want to select a maximum weight disjoint subfamily of $\cF$.
Arkin and Hassin~\cite{arkin-hassin}
gave a $(k-1+\epsilon)$-approximation algorithm,
Chandra and Halld{\'o}rsson~\cite{chandra-halldorson}
improved it to $(2k+2+\epsilon)/3$-approximation,
later improved by Berman~\cite{berman} to $(k+1+\epsilon)/2$-approximation.
All the mentioned results are based on local search.

Also for the standard (unweighted) \kSP problem
Chan and Lau~\cite{lau} presented a strengthened
LP relaxation, which has integrality gap $(k+1)/2$.

On the other hand, Hazan et al~\cite{hazan}
have shown that \kSP is hard to approximate within a factor of $\Oh(k/\log k)$. 
Concerning small values of $k$, Berman and Karpinski~\cite{berman-karpinski}
obtained a $98/97-\epsilon$ hardness for \tDM, while
Hazan et al.~\cite{hazan2} obtained $54/53-\epsilon$, $30/29-\epsilon$, and $23/22-\epsilon$ hardness
for $4$, $5$ and $6$-{\sc Dimensional Matching} respectively (note that a hardness result for \kDM
directly gives a hardness for \kSP).

Recently Sviridenko and Ward~\cite{sviridenko-ward} have
independently obtained a $(k+2)/3$-approximation algorithm for \kSP.
They observed that the analysis of Halld{\'o}rsson~\cite{h95} 
can be combined with a clever application of the color coding
technique. However to the best of our understanding 
it is not possible to obtain $(k+1+\epsilon)/3$-approximation
for \kSP using the tools of~\cite{sviridenko-ward},
and in particular Sviridenko and Ward do not improve the
approximation ratio for $\tDM$.
The main difference between this article and~\cite{sviridenko-ward}
is in handling sets of the optimum solution, that
intersect exactly one set in a local maximum.

\subsection{Organisation}

We start with preliminaries in Section~\ref{sec:preliminaries},
where we recall standard graph notation together
with the definition of pathwidth and path decompositions.

Section~\ref{sec:main} contains the main result of this paper,
that is the $(k+1+\epsilon)/3$-approximation for \kSP.
First, we introduce the notion of improving set of bounded
pathwidth in Section~\ref{ssec:improving}.
In Section~\ref{ssec:algorithm} we apply the color coding 
technique to obtain a polynomial time algorithm
searching an improving set of logarithmic size
of bounded pathwidth.
In Section~\ref{ssec:analysis} we analyse a local search
maximum with respect to bounded pathwidth improving sets of
logarithmic size.
The heart of our analysis is contained in 
an abstract combinatorial Lemma~\ref{lem:improving-tree}
which is later applied in the proof of Lemma~\ref{lem:analysis}.

The proof of Theorem~\ref{thm:w1-intro} is given
in Section~\ref{sec:hardness}.
Finally, in Section~\ref{sec:conclusions} we conclude
with potential future research directions.

\section{Preliminaries}
\label{sec:preliminaries}

We use standard graph notation. For an undirected graph $G$ by $V(G)$ and $E(G)$
we denote the set of its vertices and edges respectively. By $N_G(v) = \{u : uv \in E(G)\}$
we denote the open neighborhood of a vertex $v$, while the closed neighborhood is defined as $N_G[v] = N_G(v) \cup \{v\}$.
Similarly, for a subset of vertices $X$ we have $N_G[X] = \bigcup_{v \in X} N_G[v]$ and $N_G(X) = N_G[X] \setminus X$.

By a disjoint family of sets we denote a family, where each pair of sets is pairwise disjoint.
For a positive integer $r$ by $[r]$ we denote the set $\{1,\ldots,r\}$.

\paragraph{Pathwidth and path decompositions} A \emph{path decomposition} of a graph~$G=(V,E)$ is a sequence~$\pathdecomp=(B_i)_{i=1}^q$, where each
set $B_i$ is a subset of vertices~$B_i\subseteq V$ (called a \emph{bag}) such that~$\bigcup_{1 \le i \le q} B_i=V$ and the following properties hold:
\begin{itemize}
\item[(i)] For each edge~$uv\in E(G)$ there is a bag $B_i$ in~$\pathdecomp$ such that~$u,v\in B_i$.
\item[(ii)] If~$v\in B_i\cap B_j$ then~$v\in B_{\ell}$ for each $\min(i,j) \le \ell \le \max(i,j)$.
\end{itemize}

The \emph{width} of~$\pathdecomp$ is the size of the largest bag minus one, and the {\em pathwidth} of a graph~$G$ is the minimum width over all possible path decompositions of~$G$. Since our focus here is on path decompositions we only mention in passing that the related notion of \emph{treewidth} can be defined similarly, except for letting the bags of the decomposition form a tree instead of a path.

In order to make the description easier to follow,
it is common to use path decompositions that adhere to some simplifying properties.
The most commonly used notion is that of a nice path decompositions, introduced by Kloks~\cite{Kloks94}; the main idea is that adjacent nodes can be assumed to have bags differing by at most one vertex. 

\begin{definition}[{\bf nice path decomposition}] \label{def:nicepathdecomp}
A \emph{nice path decomposition} is a path decomposition $\pathdecomp=(B_i)_{i=1}^q$, where each bag is of one of the following types:
\begin{itemize}
\item \textbf{First (leftmost) bag}: the bag $B_1$ is empty,~$B_1=\emptyset$.
\item \textbf{Introduce bag}: an internal bag~$B_i$ of $\pathdecomp$ with predecessor~$B_{i-1}$ such that~$B_i = B_{i-1} \cup \{v\}$ for some $v \notin B_{i-1}$. 
This bag is said to \emph{introduce} $v$.
\item \textbf{Forget bag}: an internal bag~$B_i$ of $\pathdecomp$ with predecessor~$B_{i-1}$ for which $B_i = B_{i-1} \setminus \{v\}$ for some $v \in B_{i-1}$. This bag is said to \emph{forget} $v$.
\item \textbf{Last (rightmost) bag}: the bag associated with the largest index, i.e. $q$, is empty,~$B_q=\emptyset$.
\end{itemize}
\end{definition}

It is easy to verify that any given path decomposition can be transformed in polynomial time into a nice path decomposition without increasing its width.

\section{Local search algorithm}
\label{sec:main}

In this section we present the main result of the paper,
i.e. the $(k+1+\epsilon)/3$-approximation algorithm for \kSP,
proving Theorem~\ref{thm:main}.
We start with introducing the notion of improving set of bounded
pathwidth in Section~\ref{ssec:improving}.
Next, in Section~\ref{ssec:algorithm} we apply the color coding 
technique to obtain a polynomial time algorithm
searching an improving set of logarithmic size
of bounded pathwidth.
In Section~\ref{ssec:analysis} we analyse a local search
maximum with respect to bounded pathwidth improving sets of
logarithmic size.
The heart of our analysis is contained in 
an abstract combinatorial Lemma~\ref{lem:improving-tree}
which is later applied in the proof of Lemma~\ref{lem:analysis}.

\subsection{Bounded pathwidth improving set}
\label{ssec:improving}

Let us assume that an instance $\cF \subseteq 2^U$ of \kSP is given.
Moreover by $\cF_0 \subseteq \cF$ we denote some disjoint subfamily of $\cF$,
which we can think of as a current feasible solution of a local search algorithm.
In what follows we define a {\em conflict graph}, which
is a bipartite undirected graph with two independent sets of vertices being $\cF_0$ and $\cF \setminus \cF_0$,
where an edge reflects non-empty intersection.

\begin{definition}[{\bf conflict graph}]
For a disjoint family $\cF_0 \subseteq \cF$ by a {\em conflict graph} $G_{\cF_0}$ we denote
an undirected bipartite graph with vertex set $\cF$ and edge set $\{S_1S_2 : S_1 \in \cF_0, S_2 \in (\cF \setminus \cF_0), S_1 \cap S_2 \neq \emptyset\}$.
\end{definition}

Next, we define an {\em improving set} $X \subseteq \cF \setminus \cF_0$, which
can be used to increase the cardinality of $\cF_0$, and then we introduce 
a notion of an {\em improving set of bounded pathwidth}, which will be crucial in 
both the algorithm and the analysis of its approximation ratio.

\begin{definition}[{\bf improving set}]
\label{def:improving}
For a disjoint family $\cF_0 \subseteq \cF$ a set $X \subseteq \cF \setminus \cF_0$
is called an {\em improving set}, if the following conditions hold:
\begin{itemize}
  \item all sets of $X$ are pairwise disjoint,
  \item $|N_{G_{\cF_0}}(X)| < |X|$, i.e. the number of sets of $\cF_0$ having a common
  element with at least one set of $X$ is strictly smaller than $|X|$.
\end{itemize}
\end{definition}

Observe, that if we have an improving set $X$, then $(\cF_0 \setminus N_{\cg}(X)) \cup X$
is a disjoint subfamily of $\cF$ of size greater than $|\cF_0|$, hence the name improving set.

\begin{definition}[{\bf improving set of bounded pathwidth}]
An improving set $X$ with respect to $\cF_0 \subseteq \cF$ has {\em pathwidth at most $\pw$},
if the subgraph of the conflict graph $G_{\cF_0}$ induced by $N_{\cg}[X]$ has pathwidth at most $\pw$.
\end{definition}

\subsection{Algorithm}
\label{ssec:algorithm}

To find an improving set of bounded pathwidth we use
the color coding technique of Alon et al.~\cite{color-coding},
which is by now a well-established tool in parameterized complexity
used for finding a set consisting of disjoint objects.
We use two random colorings $c_{\cF_0} : \cF_0 \to [r-1]$, $c_U : U \to [rk]$,
where $c_U$ ensures that the sets of $X$ are disjoint,
while $c_{\cF_0}$ is used not to consider the same set of $\cF_0$ twice.

\begin{lemma}
\label{lem:search}
There is an algorithm, that given a disjoint family $\cF_0 \subseteq \cF$,
and two coloring functions $c_{\cF_0} : \cF_0 \to [r-1]$, $c_U : U \to [rk]$
in $2^{\Oh(rk)}|\cF|^{\Oh(\pw)}$ time determines, whether there exists an improving 
set $X \subseteq \cF \setminus \cF_0$ of size at most $r$
of pathwidth at most $\pw$, such that $c_{\cF_0}$ is injective on $N_{G_{\cF_0}}(X)$ 
and $c_U$ is injective on $\bigcup_{S \in X} S$.
\end{lemma}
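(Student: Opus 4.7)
The plan is a dynamic programming algorithm that constructs the sought improving set $X$ simultaneously with a nice path decomposition of the induced bipartite subgraph $\cg[N_{\cg}[X]]$. We simulate a sequence of introduce-bag and forget-bag operations of width at most $\pw$, tagging every introduced vertex either as an element of $X$ or as an element of $N_{\cg}(X)$, and we accept if we reach the empty bag with the $X$-count strictly greater than the $N_{\cg}(X)$-count.

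The DP state stores: (i) the current bag, encoded as an ordered tuple of at most $\pw+1$ vertices of $\cF$, each tagged as $X$ or as $N_{\cg}(X)$; (ii) a bitmask over $[rk]$ recording which colors $c_U$ have been used by elements of sets already committed to $X$; and (iii) a bitmask over $[r-1]$ recording which $c_{\cF_0}$-colors have been used by sets already committed to $N_{\cg}(X)$. These bitmasks enforce the required injectivities and implicitly cap $|X|$ at $r$ and $|N_{\cg}(X)|$ at $r-1$. The number of reachable states is $|\cF|^{\Oh(\pw)} \cdot 2^{\Oh(rk)}$. An \emph{introduce} transition guesses a vertex $v \in \cF$ (at most $|\cF|$ options) together with a tag: if the tag is $X$ we require $v \in \cF \setminus \cF_0$ and that the colors $c_U(e)$ for $e \in v$ be pairwise distinct and fresh with respect to the bitmask, then update the bitmask; if the tag is $N_{\cg}(X)$ we require $v \in \cF_0$ and $c_{\cF_0}(v)$ fresh. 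A \emph{forget} transition removes a vertex $v$ from the bag; when $v$ is $X$-tagged we verify that every $u \in \cF_0$ with $u \cap v \neq \emptyset$ is currently present in the bag. This forget-time check enforces the path-decomposition property for the subgraph induced on $N_{\cg}[X]$: every conflict edge incident to an $X$-vertex must be covered by some bag containing both endpoints.

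The main obstacle, and the content of the correctness argument, is to show that the forget check is both complete and sound. For completeness, any valid improving set $X$ of pathwidth at most $\pw$ admits by definition a nice path decomposition of $\cg[N_{\cg}[X]]$ of width at most $\pw$; replaying its introduce/forget sequence as DP transitions succeeds at every step, because whenever an $X$-vertex $v$ is forgotten, all its $\cg$-neighbors (which, since $v \in X$, lie in $N_{\cg}(X) \subseteq \cF_0$) still appear in the bag. For soundness, an accepting computation yields a set $X \subseteq \cF \setminus \cF_0$ of pairwise disjoint sets (disjointness follows from $c_U$-injectivity on $\bigcup_{S \in X} S$) together with a set $S \subseteq \cF_0$ with $|S| \le r-1 < |X|$, and the forget checks guarantee that every $\cg$-neighbor of every set of $X$ already lies in $S$, so $N_{\cg}(X) \subseteq S$ and hence $|N_{\cg}(X)| < |X|$; $X$ is therefore an improving set, and the accepting computation itself traces out a path decomposition witnessing that $\cg[N_{\cg}[X]]$ has pathwidth at most $\pw$. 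Each transition costs $\poly(|\cF|)$, so the overall running time is $|\cF|^{\Oh(\pw)} \cdot 2^{\Oh(rk)}$, matching the bound claimed in the statement.
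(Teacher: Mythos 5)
Your overall architecture --- a reachability/DP computation over states consisting of a bounded-size bag together with a bitmask of used $c_U$-colors and a bitmask of used $c_{\cF_0}$-colors, with introduce and forget transitions simulating a nice path decomposition of $\cg[N_{\cg}[X]]$ --- is essentially the paper's, and your soundness argument and running-time count are fine. The gap is in completeness, specifically in your forget-time check. You require that when an $X$-tagged vertex $v$ is forgotten, \emph{every} neighbour $u \in \cF_0$ of $v$ be simultaneously present in the current bag, and you justify replaying a width-$\pw$ nice path decomposition by asserting that this always holds. It does not: in a path decomposition the interval of bags containing $u$ must merely intersect the interval containing $v$, so $u$ may have been forgotten long before $v$ is. For example, a star with centre $v \in X$ and $k$ leaves in $\cF_0$ has pathwidth $1$ via the bags $\{v,u_1\},\dots,\{v,u_k\}$, yet your check at the forget step for $v$ would demand all $k$ leaves in a single bag. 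Consequently your DP rejects legitimate witnesses: it only detects improving sets admitting a width-$\pw$ decomposition in which each $X$-vertex's last bag contains its entire neighbourhood, a strictly stronger property than having pathwidth at most $\pw$.

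The paper circumvents this by relaxing the forget condition to ``every neighbour of $v$ has been \emph{introduced} at some earlier point'', verified through the colour bitmask (namely $c_{\cF_0}(N(v)) \subseteq C_{\cF_0}$), and compensating with introduce-time checks: when $v \in X$ is introduced, its not-yet-present neighbours must have colours outside $C_{\cF_0}$, and an $\cF_0$-set may not be introduced if its colour collides with that of a still-pending neighbour of some $X$-set currently in the bag. Together these guarantee that every conflict edge ends up covered by a common bag without forcing simultaneity at forget time. Your version can be repaired more bluntly: replace each bag $B_i$ by $B_i \cup \bigcup_{v \in B_i \cap X} N_{\cg}(v)$; since each neighbour's interval meets $v$'s interval, the union of intervals is an interval, so this is again a path decomposition, now of width $\Oh(\pw k)$, and it does satisfy your forget condition. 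Running your DP with bags of size $\Oh(\pw k)$ then yields the one-sided guarantee needed for Theorem~\ref{thm:search} within the claimed $2^{\Oh(rk)}|\cF|^{\Oh(\pw)}$ bound. As written, however, the completeness claim is false and the proof has a genuine hole.
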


\begin{proof}
For the sake of notation by adding dummy distinct elements we ensure 
that each set of $\cF$ has size exactly~$k$.
Define an auxiliary directed graph $D=(V_D,A_{forget} \cup A_{introduce})$,
where each vertex is characterized by a subset of set colors $[r-1]$,
a subset of element colors $[rk]$, and a subset of $\cF$ of size at most $\pw+1$,
i.e.  
\begin{align*}
V_H = \{v(C_{\cF_0}, C_{U}, B) & : C_{\cF_0} \subseteq [r-1], C_U \subseteq [rk],\\
     & \quad B \subseteq \cF, |B| \le \pw+1\}\,.
\end{align*}
Note that this graph has $\Oh(2^{r(k+1)}|\cF|^{\pw+1})$ vertices.

Since there will be no possibility of confusion, to make the proof easier
to follow by $N[X]$ for $X \subseteq \cF$ we denote $N_{\cg}[X]$,
i.e. we omit the subscript $\cg$.
The idea behind the construction is that each vertex of $V_H$ describes
a potential prefix of a sequence of bags in a path decomposition of $N[X]$
for some $X \subseteq \cF \setminus \cF_0$.
The set $B$ encodes the set of vertices of $N[X]$ in the current bag
and ensures the bounded pathwidth property.
Instead of storing all the sets of $X$ that have already appeared in the 
sequence of bags, we store only the colors of the elements of $\bigcup_{S \in X} S$ (encoded by~$C_U$),
as it is enough to maintain the disjointness of sets of $X$
and keep track of the cardinality of $X$ - due to the assumption that each 
set of is size exactly~$k$.
Similarly instead of storing all the sets of $N[X]$
that were already introduced, we only store their colors (encoded by~$C_{\cF_0}$).

To the set $A_{introduce}$ we add the following arcs.
For $s = v(C_{\cF_0},C_U,B)  \in V_D$, $S \in \cF$
such that $|B| \le \pw$:
\begin{itemize}
  \item if $S \in \cF \setminus \cF_0$, $c_U(S) \cap C_U = \emptyset$, $c_{\cF_0}$ is injective on $N(S)$ 
  and $c_{\cF_0}(N(S) \setminus B) \cap C_{\cF_0} = \emptyset$,
  then add to $A_{introduce}$ an arc $(s, v(C_{\cF_0}, C_U \cup c_U(S), B \cup \{S\}))$
  \item if $S \in \cF_0$, $c_{\cF_0}(S) \not\in C_{\cF_0}$ and for each $S' \in B \setminus \cF_0$ 
  either $S \in N(S')$, or $c_{\cF_0}(S) \not\in c_{\cF_0}(N(S'))$,
  then add to $A_{introduce}$ an arc $(s, v(C_{\cF_0} \cup \{c_{\cF_0}(S)\}, C_U, B \cup \{S\}))$
\end{itemize}

To the set $A_{forget}$ we add the following arcs.
For $s = v(C_{\cF_0},C_U,B)  \in V_D$, $S \in B$  add to $A_{forget}$
an arc  $(s, v(C_{\cF_0}, C_U, B \setminus \{S\}))$
if one of the following conditions holds:
\begin{itemize}
  \item $S \in \cF_0$,
  \item $S \not\in \cF_0$ and $c_{\cF_0}(N(S)) \subseteq C_{\cF_0}$.
\end{itemize}

\begin{claim}
There exists a path in the graph $D$ from the vertex $v(\emptyset, \emptyset, \emptyset)$
to a vertex $v(C_{\cF_0}, C_U, \emptyset) \in V_D$ for $|C_{\cF_0}| < |C_U| / k$
iff
there exists an improving set $X$ of size at most $r$ of pathwidth at most $\pw$,
such that $c_{\cF_0}$ is injective on $N(X)$ and $c_U$ is injective on $\bigcup_{S \in X} S$.
\end{claim}

\begin{proof}
Assume that there is a path $s_1,\ldots,s_q$ in $H$,
where $s_i=(C_{\cF_0}^i,C_U^i,B_i)$, $s_1=(\emptyset,\emptyset,\emptyset)$ , $|C_{\cF_0}^q| < |C_U^q|/k$ and $B_q=\emptyset$.
Let $X = \bigcup_{1 \le i \le q} B_i \setminus \cF_0$.
By construction of $D$, we have $|X|=|C_U^q| / k \le r$.
By the definition of $A_{introduce}$ and $A_{forget}$
since $B_q=\emptyset$, at the time a vertex $v \in X$ appears
for the first time in some $B_i$ we ensure that all its neighbors in $\cg$
are either in $B_i$ or are colored by $c_{\cF_0}$ with colors not yet in $C_{\cF_0}^i$.
Moreover at the time $v \in X$ is forgotten, i.e. removed from some $B_i$,
we ensure that all of its neighbors in $\cg$ have been already added to bags.
Therefore $N[X] \subseteq \bigcup_{1 \le i \le q} B_i$ and for each 
edge $e$ of $G[N[X]]$ the endpoints of $e$ appear in some bag $B_i$.
Since no set of $\cF_0$ is added twice, due to the coloring $c_{\cF_0}$,
no set of $\cF \setminus \cF_0$ is added twice, due to the coloring $c_U$,
$(B_i \cap N[X])_{i=1}^q$ is a path decomposition of $N[X]$ of width at most $\pw$.
Finally $|N(X)| \le |C^q_{\cF_0}| < |C^q_U|/k = |X|$.
Hence $X$ is an improving set of size at most $r$ and of pathwidth at most $\pw$.

In the other direction, let $X$ be an improving set of size at most $r$ 
such that $c_{\cF_0}$ is injective on $N(X)$, $c_U$ is injective on $\bigcup_{S \in X} S$,
and let $\pathdecomp=(B_i)_{i=1}^q$ be a nice path decomposition of $N[X]$ of width at most $\pw$.
For $1 \le i \le q$ define $s_i \in V_D$ as $s_i = v(c_{\cF_0}(B_i' \cap \cF_0), c_U(\bigcup_{S \in B_i' \setminus \cF_0} S), B_i)$,
where $B_i' = \bigcup_{1 \le j \le i} B_i$.
Observe that $s_1=(\emptyset,\emptyset,\emptyset)$, $s_q=(C_{\cF_0}, C_U, \emptyset)$ for $|C_{\cF_0}| = |N(X)| < |X| = |C_U| / k$
and moreover if $B_{i+1}$ is an introduce bag, then $(s_i,s_{i+1}) \in A_{introduce}$
while if $B_{i+1}$ is a forget bag, then $(s_i,s_{i+1}) \in A_{forget}$.
Consequently there is a path from $s_1$ to $s_q$ in the graph $D$.

\end{proof}

By the above claim it is enough to run a standard graph
search algorithm, to check whether there exists a path
from the vertex $v(\emptyset, \emptyset, \emptyset)$
to $v(C_{\cF_0}, C_U, \emptyset)$ where $|C_{\cF_0}| < |C_U| / k$,
which finishes the proof of Lemma~\ref{lem:search}.
\end{proof}

\begin{theorem}
\label{thm:search}
There is an algorithm, that given a disjoint family $\cF_0 \subseteq \cF$,
in $2^{\Oh(rk)}|\cF|^{\Oh(\pw)}$ time determines, whether there exists an improving 
set $X \subseteq \cF \setminus \cF_0$ of size at most $r$
of pathwidth at most $\pw$.
\end{theorem}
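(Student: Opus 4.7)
The plan is to reduce Theorem~\ref{thm:search} to Lemma~\ref{lem:search} by derandomizing the choice of the two colorings $c_{\cF_0}$ and $c_U$ via standard perfect hash families, following the color coding paradigm of Alon, Yuster and Zwick.

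First I would make the following quantitative observation. Suppose an improving set $X \subseteq \cF \setminus \cF_0$ of size at most $r$ exists. Then by Definition~\ref{def:improving} we have $|N_{\cg}(X)| < |X| \le r$, so $|N_{\cg}(X)| \le r-1$. Moreover, since every set in $\cF$ has at most $k$ elements, $|\bigcup_{S \in X} S| \le rk$. In particular, a coloring $c_{\cF_0} : \cF_0 \to [r-1]$ that is injective on $N_{\cg}(X)$ exists, and similarly a coloring $c_U : U \to [rk]$ injective on $\bigcup_{S \in X} S$ exists.

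Next I would invoke the construction of Naor, Schulman and Srinivasan of an $(n,t)$-perfect hash family: a family $\mathcal{H}$ of functions $[n] \to [t]$ of size $2^{\Oh(t)} \log n$, computable in time $2^{\Oh(t)} \poly(n)$, such that for every $S \subseteq [n]$ with $|S| \le t$ there is some $h \in \mathcal{H}$ that is injective on $S$. I would construct one such family $\mathcal{H}_1$ for $(|\cF_0|, r-1)$ and one such family $\mathcal{H}_2$ for $(|U|, rk)$; both have size $2^{\Oh(rk)} \log |\cF|$ and construction time $2^{\Oh(rk)} \poly(|\cF|)$.

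Then the algorithm iterates over every pair $(c_{\cF_0}, c_U) \in \mathcal{H}_1 \times \mathcal{H}_2$ and runs the algorithm of Lemma~\ref{lem:search} for that pair. It returns "yes" if and only if some run does. Correctness follows because, if any improving set $X$ of size at most $r$ and pathwidth at most $\pw$ exists, the observation above and the defining property of the perfect hash families guarantee some pair $(c_{\cF_0}, c_U) \in \mathcal{H}_1 \times \mathcal{H}_2$ for which $c_{\cF_0}$ is injective on $N_{\cg}(X)$ and $c_U$ is injective on $\bigcup_{S \in X} S$, and for that pair Lemma~\ref{lem:search} will detect $X$; conversely any positive answer of Lemma~\ref{lem:search} yields a real improving set of the required type. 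The total running time is
\[
|\mathcal{H}_1| \cdot |\mathcal{H}_2| \cdot 2^{\Oh(rk)} |\cF|^{\Oh(\pw)} = 2^{\Oh(rk)} |\cF|^{\Oh(\pw)},
\]
as claimed. There is really no hard step here beyond having Lemma~\ref{lem:search} and a black-box perfect hash family construction available; the only thing to be careful about is tracking that the parameters of the two perfect hash families are the right ones, which the cardinality bounds on $N_{\cg}(X)$ and $\bigcup_{S \in X} S$ justify.
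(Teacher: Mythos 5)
Your proposal is correct and matches the paper's proof: the paper likewise reduces to Lemma~\ref{lem:search} by iterating over a $(|\cF_0|,r-1,r-1)$-splitter and a $(|U|,rk,rk)$-splitter (the Naor--Schulman--Srinivasan construction), justified by exactly the same cardinality bounds $|N_{\cg}(X)|\le r-1$ and $|\bigcup_{S\in X}S|\le rk$. The only cosmetic difference is that the paper first presents the randomized version with an explicit success-probability calculation before derandomizing, whereas you go directly to the deterministic hash families.
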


\begin{proof}
Observe, that if we take $c_{\cF_0} : \cF_0 \to [r-1]$
where the color of each set is chosen uniformly and independently at random,
then for an improving set $X$ of size at most $r$ 
the function $c_{\cF_0}$ is injective on $N_{G_{\cF_0}}(X)$
with probability at least $$(r-1)!/(r-1)^{r-1} \ge ((r-1)/e)^{r-1} / (r-1)^{r-1} = e^{-(r-1)}\,.$$
Similarly, if we assign a color of $[rk]$ to each element of $U$,
then with probability at least $e^{-rk}$ the function $c_U : U \to [rk]$ is injective on $\bigcup_{S \in X} S$.
Therefore invoking Lemma~\ref{lem:search} with random colorings $c_{\cF_0}, c_U$ at least $e^{r-1+rk}$ 
times would yield a constant error probability.

To obtain a deterministic algorithm we can use the, by now standard, technique of splitters.  An $(n,a,b)$-splitter is a family $\mathcal{H}$ of functions $[n] \to [b]$, such that for any $W \subseteq [n]$ of size at most $a$ there exists $f \in \mathcal{H}$ that is injective on $W$.  What we need is a small family of $(n,a,a)$-splitters.

\begin{theorem}[ \hskip -0.25cm \cite{opt-splitters}]
\label{thm:splitters}
There exists an $(n,a,a)$-splitter of size $e^a a^{\Oh(\log a)} \log n$
that can be constructed in $\Oh(e^a a^{\Oh(\log a)} n \log n)$ time.
\end{theorem}

Therefore instead of using random colorings $c_{\cF_0}$, $c_U$ we can use Theorem~\ref{thm:splitters}
to construct $(|\cF_0|,r-1,r-1)$ and $(|U|,rk,rk)$ splitters, leading to a deterministic algorithm,
which finishes the proof of Theorem~\ref{thm:search}.
\end{proof}

\subsection{Analysis}
\label{ssec:analysis}

In this subsection we analyze a local search maximum, with respect
to logarithmic size improving sets of constant pathwidth.
It is well known that an undirected graph of average degree
at least $2+\epsilon$ contains a cycle of length at most $c_\epsilon \log n$,
   where the constant $c_{\epsilon}$ depends on $\epsilon$.
This observation was the base for the quasipolynomial time algorithms of~\cite{h95,cgm13}.
Here, however we need to generalize this result extensively,
as the analysis of~\cite{cgm13} relies on improving sets of unbounded pathwidth.

Throughout this subsection we operate on multigraphs, as there might be several
parallel edges in a graph, however there will be no self-loops.

\begin{lemma}
\label{lem:improving-tree}
Let $H=(V,E)$ be an $n$-vertex undirected multigraph of minimum degree at least $3$.
Assume that each edge $e \in E$ is associated with a subset of an alphabet $w_e \subseteq \Sigma$ of size at most $\gamma$, where $\gamma \ge 1$.
If each element $c \in \Sigma$ appears in at most $\gamma$ sets $w_e$, i.e. $\forall_{c\in \Sigma}~|\{e : e \in E, c \in w_e\}| \le \gamma$, then
there exists a tree $T_0 = (V_0, E_0)$, which is a subgraph of $H$, and a vertex $r_0 \in V_0$, such that:
\begin{itemize}
  \item $|V_0| \le 4(\log_{3/2} n + 2)$;
  \item there exist two edges $e_1, e_2 \in E \setminus E_0, e_1 \neq e_2$ which have both endpoints in $V_0$;
  \item $T_0$ is a tree with at most $4$ leaves; 
  \item for each pair of edges $e_1,e_2 \in E_0$ such that $w_{e_1} \cap w_{e_2} \neq \emptyset$
  we have $|\dist_{T_0}(r_0,e_1) - \dist_{T_0}(r_0,e_2)| \le \beta$, where $\beta = \lceil \log_{3/2} (12\gamma^2) \rceil$,
  and $\dist_{T_0}(r_0,uv) = \min(\dist_{T_0}(r_0,u), \dist_{T_0}(r_0,v))$.
\end{itemize}
\end{lemma}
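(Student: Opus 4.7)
The plan is to construct $T_0$ as a subtree of a color-respecting BFS tree of $H$ grown from an arbitrary vertex $r$. Starting from $r$, I build a rooted BFS tree $T$ level by level; to form level $\ell+1$ from level $\ell$ I use only edges $e$ that are \emph{compliant} with the current tree, meaning that $w_e$ is disjoint from $w_{e'}$ for every edge $e'$ already placed in $T$ at BFS-level at most $\ell-\beta$. At each step I also record, as \emph{extra edges}, the edges of $H$ whose two endpoints already lie in $T$ but which are not tree edges; this includes both redundant compliant edges discovered during the BFS and non-compliant edges that happen to join two tree vertices. The algorithm halts as soon as two extra edges have been collected.

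Write $N_\ell$ for the number of vertices at BFS-level $\ell$. The heart of the argument is that, as long as at most one extra edge has been accumulated, the BFS grows by at least a factor of $3/2$ per level. Assuming this growth hypothesis inductively, the number of tree edges of $T$ at levels $\leq \ell-\beta$ forms a geometric sum bounded by $O(N_\ell/(3/2)^\beta)$. Since $|w_{e'}|\leq \gamma$ and each color of $\Sigma$ appears in at most $\gamma$ edges of $H$, each old tree edge forbids at most $\gamma^2$ edges of $H$; combining this with the choice $\beta = \lceil \log_{3/2}(12\gamma^2)\rceil$ forces the total number of forbidden edges in $H$ to be only a small constant fraction of $N_\ell$. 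Because the minimum degree is at least $3$, the vertices at level $\ell$ contribute at least $2N_\ell$ non-parent edge-incidences, and after subtracting the forbidden ones there remain strictly more than $(3/2)N_\ell$ compliant non-parent incidences. Each such incidence either contributes a fresh vertex to level $\ell+1$ or creates an extra edge; under the hypothesis that at most one extra edge has appeared so far, only a constant number of incidences are lost and the $3/2$ growth is maintained. Consequently, the algorithm must terminate with two extra edges $e_1,e_2$ within at most $\log_{3/2}n+2$ levels, since otherwise $N_{\log_{3/2}n+2}$ would exceed $n$.

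I then set $r_0$ to be the least common ancestor in $T$ of the (at most four) endpoints of $e_1$ and $e_2$, and let $T_0$ be the minimal subtree of $T$ containing $r_0$ together with these endpoints. In the generic case $r_0$ is an LCA of endpoints lying in at least two distinct child-subtrees, so it has degree at least two in $T_0$ and is internal; the leaves of $T_0$ are then a subset of the at most four endpoints of $e_1,e_2$. In the degenerate case where both extra edges are parallel edges between the same two vertices, $T_0$ is simply the BFS path between them and has exactly two leaves. Either way $T_0$ has at most four leaves. Each endpoint lies at BFS-depth at most $\log_{3/2}n+2$, hence at $T_0$-distance at most $\log_{3/2}n+2$ from $r_0$, yielding $|V(T_0)|\leq 4(\log_{3/2}n+2)$. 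Finally, since $T_0\subseteq T$ is rooted at $r_0$, for every edge $e\in E(T_0)$ the quantity $\dist_{T_0}(r_0,e)$ equals the BFS-level of the upper endpoint of $e$ minus the BFS-level of $r_0$; therefore for any two edges of $T_0$ sharing a color, the difference of their $\dist_{T_0}$ values equals the difference of their BFS levels, which is at most $\beta$ by compliance.

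The principal obstacle is making the expansion estimate precise. The multigraph setting allows parallel edges between the same pair of vertices; edges with both endpoints at level $\ell$ are double-counted in the incidence sum $\sum_{v}(\deg(v)-1)\geq 2N_\ell$; and several compliant edges from level $\ell$ may land at the same new vertex of level $\ell+1$, the surplus becoming extra edges rather than additional growth. Keeping all three effects balanced against the forbidden-edge bound so that the constants align with $\beta=\lceil \log_{3/2}(12\gamma^2)\rceil$ and produce the clean $3/2$-factor growth is where the bookkeeping must be done delicately.
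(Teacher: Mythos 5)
Your outline follows essentially the same strategy as the paper: grow an expanding tree level by level from a root, refuse edges whose color set conflicts with tree edges more than $\beta$ levels above (so that $\Sigma$-nearness holds for free, with the geometric decay of level sizes making the banned set a small fraction of the current level), stop as soon as two chords with both endpoints in the tree have appeared, and return the Steiner subtree of their at most four endpoints rooted at their least common ancestor. The endgame (four leaves, the $4(\log_{3/2}n+2)$ bound, and deducing $\Sigma$-nearness from BFS levels) is handled correctly and matches the paper.

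The gap is that the central expansion estimate is asserted, not proved --- and your own closing paragraph concedes exactly this. The claim that ``only a constant number of incidences are lost and the $3/2$ growth is maintained'' is the entire content of the lemma, and the constants are genuinely tight: with minimum degree $3$ you have only $2N_\ell$ non-parent incidences at level $\ell$ and need $\lceil\tfrac32 N_\ell\rceil$ fresh vertices, so you can afford to lose very few edges to chords, collisions, parallel edges, and the banned set combined. The paper makes this work by (1) dispatching three corner cases up front so that the root has three \emph{distinct} neighbors --- without this your BFS base case fails, e.g.\ when the starting vertex has all its edges parallel to one or two neighbors, and note that such a configuration need not immediately yield two chords usable in a tree of the required shape unless you argue it separately, which is what the paper's cases (i)--(iii) do; (2) maintaining the exact invariant of $\lfloor 2(3/2)^j\rfloor$ vertices per level; (3) showing that if the level fails to grow, then after discarding at most one edge with both endpoints in the tree and at most one edge lost to a collision at a new vertex (any worse and two chords are produced, possibly after \emph{extending} the tree by one or two edges to bring the collision vertices into $V_0$), the remaining $2|V'|-2$ edges minus the banned ones still suffice; and (4) verifying the resulting inequality by explicit arithmetic, including a separate check for levels $i\le 4$ where the floor terms matter. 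None of this bookkeeping appears in your proposal, so as written the proof is incomplete at its most delicate point.
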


\begin{proof}
First we deal with some corner cases. 
\begin{itemize}
\item[(i)] If in $H$ there are three parallel edges $e_a,e_b,e_c$ between vertices $u$ and $v$,
then as $T_0$ we take $(\{u,v\},\{e_a\})$ and set $e_1 = e_b$, $e_2 = e_c$.
\item[(ii)] If in $H$ there are three vertices $u,v,w$, two parallel edges $e_a, e_b$ between $u$ and $v$
as well as two parallel edges $e_c,e_d$ between $v$ and $w$, 
than as $T_0$ we take $(\{u,v,w\}, \{e_a,e_c\})$ and set $e_1=e_b$, $e_2 = e_d$.
\item[(iii)] In the last corner case let us assume that for each vertex $v$ of $H$
there are some two parallel edges $e_a, e_b \in E(H)$ incident to $v$.
Let $uv \in E(H)$ be any edge of $H$ for which there is no parallel edge in $H$ -
such an edge exists, as otherwise $(i)$ or $(ii)$ would hold.
Let $u'$ be a vertex such that in $H$ there are two parallel edges $e_a,e_b$ between $u$ and $u'$,
similarly let $v'$ be a vertex such that in $H$ there are two parallel edges $e_c,e_d$ between $v$ and $v'$.
Observe that $u'\neq v'$ as otherwise case (ii) would hold.
In that case $T_0=(\{u,u',v,v'\},\{e_a,uv,e_c\})$, $e_1 = e_b$ and $e_2 = e_d$.
\end{itemize}
Assuming that none of $(i)$, $(ii)$, $(iii)$ holds,
there is a vertex $r$ in $H$, which is adjacent to at least three distinct vertices $v_1,v_2,v_3$.

We are going to construct a sequence of logarithmic number of trees $T_1, T_2, \ldots$ rooted at $r$, which are subgraphs of $H$ satisfying two invariants:
\begin{itemize}
  \item {\bf (exponential growth)} for any $1 \le j \le i$ the number of vertices in $T_i$
  at distance exactly $j$ from $r$ is exactly $\lfloor 2(3/2)^j \rfloor$,
  and there are no vertices at distance more than $i$,
  \item{\bf ($\Sigma$-nearness)} for any two edges $e_1,e_2$ of $T_i$ if $w_{e_1} \cap w_{e_2} \neq \emptyset$,
  then $|\dist_{T_i}(r,e_1) - \dist_{T_i}(r,e_2)| \le \beta$.
\end{itemize}
We will show, that having constructed a tree $T_i$ for some $i \ge 1$ we
can either construct a tree $T_{i+1}$ satisfying the two invariants,
or find a tree $T_0$ with edges $e_1,e_2$ required by the claim of the lemma.

Let $T_1=(\{r,v_1,v_2,v_3\},\{rv_1,rv_2,rv_3\})$ and note that it satisfies the two invariants.
Assume, that $T_i$ (for some $i \ge 1$) was the most recently constructed tree,
and we want to construct $T_{i+1}$.
Let $V'$ be the vertices of $T_i$ at distance exactly $i$ from the root $r$.
By the exponential growth invariant we have $|V'| = \lfloor 2(3/2)^i \rfloor$.
Let $E' \subseteq E$ be the set of edges of $H$ incident to $V'$, but not contained in $E(T_i)$.
As each vertex in $H$ is of degree at least three, we have
\begin{align}
\label{eq0}
|E'| \ge 2|V'| \ge 2\lfloor 2(3/2)^j \rfloor\,.
\end{align}
Let $$E_{banned} = \{e \in E': \exists_{e' \in E(T_{i-\beta})} w_e \cap w_{e'} \neq \emptyset\}\,,$$
i.e. the set of edges having a non-empty intersection with $w_{e'}$, where $e'$ is not contained
in the last $\beta$ levels of $T_i$.  Observe that for $i \le \beta$ the set $E_{banned}$ is empty.
When extending the tree $T_i$ to maintain the $\Sigma$-nearness invariant,
we use only edges of $E' \setminus E_{banned}$.

Let $V'' = \bigcup_{uv \in E' \setminus E_{banned}} \{u,v\} \setminus V(T_i)$.
We consider two cases: either $|V''| \ge \lfloor 2(3/2)^{i+1} \rfloor$
or not. In the former case we will show that one can construct a tree $T_{i+1}$ satisfying both exponential growth and $\Sigma$-nearness invariants.
In the latter case we will show that the required tree $T_0$ and edges $e_1,e_2$ exist.

If $|V''| \ge \lfloor 2(3/2)^{i+1} \rfloor$,
then we select exactly $\lfloor 2(3/2)^{i+1} \rfloor$ vertices out of $V''$
and extend the tree $T_i$ to $T_{i+1}$ by adding one more layer of vertices (at distance $i+1$ from $r$),
connected to vertices of $V'$ by edges of $E' \setminus E_{banned}$.
Clearly the exponential growth invariant is satisfied for $T_{i+1}$. 
Furthermore, since $T_i$ satisfied the $\Sigma$-nearness invariant
and by the definition of $E_{banned}$ the tree $T_{i+1}$ also satisfies the $\Sigma$-nearness invariant.

In the remaining part of the proof we assume 
\begin{align}
\label{eq1}
|V''| < \lfloor 2(3/2)^{i+1} \rfloor
\end{align}
and show the required tree $T_0$ with edges $e_1,e_2$.
If at least two edges of $E'$ have both endpoints in $V(T_i)$,
denote those edges $uv,u'v' \in E'$, then as $T_0$ we take the subtree of $T_i$ induced
by vertices on the paths between $\{u,v,u',v'\}$ and their least common ancestor $r_0$
and set $e_1=uv$, $e_2=u'v'$ (see Figure~\ref{fig3}).
Therefore let $E'' \subseteq E'$ be the subset of edges having
exactly one endpoint in $V(T_i)$ (that is in $V'$).
By~(\ref{eq0}) we infer that
\begin{align}
\label{eq2}
|E''| \ge |E'|-1 \ge 2|V'|-1\,.
\end{align}

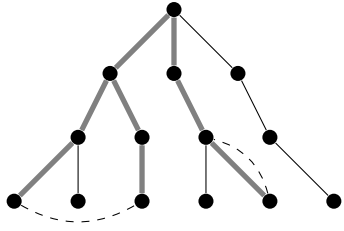
\begin{figure}
\begin{center}
\begin{tikzpicture}[scale=0.85]
  \tikzstyle{vertex}=[circle,fill=black,minimum size=0.20cm,inner sep=0pt]
  \tikzstyle{vertex2}=[circle,draw=black,fill=gray!50,minimum size=0.20cm,inner sep=0pt]
 	\tikzstyle{vertex3}=[circle,draw=black,fill=white,minimum size=0.20cm,inner sep=0pt]
 
  \tikzstyle{terminal}=[rectangle,draw=black,fill=white,minimum size=0.2cm,inner sep=0pt]

\node[vertex] (v0) at (0,0){};
\node[vertex] (v1) at (-1,-1){};
\node[vertex] (v2) at (0,-1){};
\node[vertex] (v3) at (1,-1){};

\node[vertex] (v4) at (-1.5,-2){};
\node[vertex] (v5) at (-0.5,-2){};
\node[vertex] (v6) at (0.5,-2){};
\node[vertex] (v7) at (1.5,-2){};

\node[vertex] (v8) at (-2.5,-3){};
\node[vertex] (v9) at (-1.5,-3){};
\node[vertex] (v10) at (-0.5,-3){};
\node[vertex] (v11) at (0.5,-3){};
\node[vertex] (v12) at (1.5,-3){};
\node[vertex] (v13) at (2.5,-3){};

\draw (v1) -- (v0) -- (v2);
\draw (v0) -- (v3);

\draw (v4) -- (v1) -- (v5);
\draw (v6) -- (v2);
\draw (v7) -- (v3);

\draw(v8) -- (v4) -- (v9);
\draw (v10) -- (v5);
\draw (v11) -- (v6) -- (v12);
\draw (v13) -- (v7);

\draw[dashed] (v8) edge[bend right] (v10);
\draw[dashed] (v12) edge[bend right] (v6);

\draw[color=gray,line width=2] (v8) -- (v4) -- (v1) -- (v0);
\draw[color=gray,line width=2] (v10) -- (v5) -- (v1);

\draw[color=gray,line width=2] (v12) -- (v6) -- (v2) -- (v0);

\end{tikzpicture}
\end{center}
\caption{Edges of the tree $T_0$ are gray, while edges $e_1$ and $e_2$ are dashed.}
\label{fig3}
\end{figure}

Let $E'''$ be a maximum size subset of $E''$, such
that no two edges of $E'''$ have a common endpoint in $V \setminus V(T_i)$.
Observe that if $|E'''| \le |E''|-2$, then either:
\begin{itemize}
  \item there exists three edges $e_a,e_b,e_c \in E''$ having a common endpoint in $V \setminus V(T_i)$, or
  \item there exist four edges $e_a,e_b,e_c,e_d \in E''$, such that $e_a,e_b$ have a common endpoint in $V \setminus V(T_i)$
  and $e_c,e_d$ have a common endpoint in $V \setminus V(T_i)$.
\end{itemize}
In both cases we can extend the tree $T_i$ by one or two edges to construct
$T_0$ and set $e_1=e_b$, $e_2=e_c$ (see Figure~\ref{fig4}).

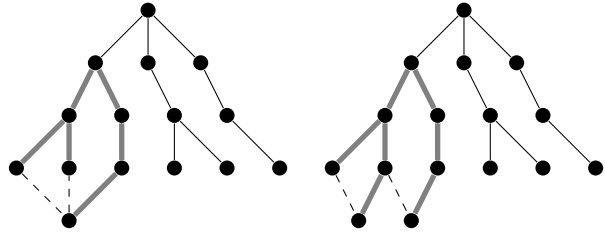
\begin{figure}
\begin{center}
\begin{tikzpicture}[scale=0.70]
  \tikzstyle{vertex}=[circle,fill=black,minimum size=0.20cm,inner sep=0pt]
  \tikzstyle{vertex2}=[circle,draw=black,fill=gray!50,minimum size=0.20cm,inner sep=0pt]
 	\tikzstyle{vertex3}=[circle,draw=black,fill=white,minimum size=0.20cm,inner sep=0pt]
 
  \tikzstyle{terminal}=[rectangle,draw=black,fill=white,minimum size=0.2cm,inner sep=0pt]

\node[vertex] (v0) at (0,0){};
\node[vertex] (v1) at (-1,-1){};
\node[vertex] (v2) at (0,-1){};
\node[vertex] (v3) at (1,-1){};

\node[vertex] (v4) at (-1.5,-2){};
\node[vertex] (v5) at (-0.5,-2){};
\node[vertex] (v6) at (0.5,-2){};
\node[vertex] (v7) at (1.5,-2){};

\node[vertex] (v8) at (-2.5,-3){};
\node[vertex] (v9) at (-1.5,-3){};
\node[vertex] (v10) at (-0.5,-3){};
\node[vertex] (v11) at (0.5,-3){};
\node[vertex] (v12) at (1.5,-3){};
\node[vertex] (v13) at (2.5,-3){};

\draw (v1) -- (v0) -- (v2);
\draw (v0) -- (v3);

\draw (v4) -- (v1) -- (v5);
\draw (v6) -- (v2);
\draw (v7) -- (v3);

\draw(v8) -- (v4) -- (v9);
\draw (v10) -- (v5);
\draw (v11) -- (v6) -- (v12);
\draw (v13) -- (v7);

\node[vertex] (x) at (-1.5,-4){};
\draw[dashed] (v8) -- (x) -- (v9);
\draw (x) -- (v10);

\draw[color=gray,line width=2] (v8) -- (v4) -- (v1);
\draw[color=gray,line width=2] (v9) -- (v4);
\draw[color=gray,line width=2] (x) -- (v10) -- (v5) -- (v1);

\begin{scope}[xshift=6cm]
\node[vertex] (v0) at (0,0){};
\node[vertex] (v1) at (-1,-1){};
\node[vertex] (v2) at (0,-1){};
\node[vertex] (v3) at (1,-1){};

\node[vertex] (v4) at (-1.5,-2){};
\node[vertex] (v5) at (-0.5,-2){};
\node[vertex] (v6) at (0.5,-2){};
\node[vertex] (v7) at (1.5,-2){};

\node[vertex] (v8) at (-2.5,-3){};
\node[vertex] (v9) at (-1.5,-3){};
\node[vertex] (v10) at (-0.5,-3){};
\node[vertex] (v11) at (0.5,-3){};
\node[vertex] (v12) at (1.5,-3){};
\node[vertex] (v13) at (2.5,-3){};

\draw (v1) -- (v0) -- (v2);
\draw (v0) -- (v3);

\draw (v4) -- (v1) -- (v5);
\draw (v6) -- (v2);
\draw (v7) -- (v3);

\draw(v8) -- (v4) -- (v9);
\draw (v10) -- (v5);
\draw (v11) -- (v6) -- (v12);
\draw (v13) -- (v7);

\node[vertex] (x) at (-2,-4){};
\node[vertex] (y) at (-1,-4){};
\draw[dashed] (v8) -- (x) -- (v9) -- (y);

\draw[color=gray,line width=2] (v8) -- (v4) -- (v1);
\draw[color=gray,line width=2] (x) -- (v9) -- (v4);
\draw[color=gray,line width=2] (y) -- (v10) -- (v5) -- (v1);
\end{scope}

\end{tikzpicture}
\end{center}
\caption{Creating the tree $T_0$ assuming $|E'''| \le |E''|-2$. Notation as in Figure~\ref{fig3}.}
\label{fig4}
\end{figure}

Consequently we have $|E'''| \ge |E''|-1$, which together with~(\ref{eq2}) gives
\begin{align}
\label{eq3}
|E'''| \ge 2|V'|-2\,.
\end{align}

In the last part of the proof we use the following claim.
\begin{claim}
\label{claim:bound}
$$|E'''\setminus E_{banned}| \ge \lfloor 2(3/2)^{i+1} \rfloor$$
\end{claim}
\begin{proof}
Recall that if $i \le \beta$, the set $E_{banned}$ is empty.
Hence by inequality~(\ref{eq3}) in that case $|E''' \setminus E_{banned}| = |E'''| \ge 2 \lfloor 2 (3/2)^i \rfloor - 2$.
A direct check shows that for each $1 \le i \le 4$ we have $2 \lfloor 2 (3/2)^i \rfloor - 2 \ge \lfloor 2(3/2)^{i+1} \rfloor$,
which proves the claim in the case $i \le 4$.

When $4 < i \le \beta$ we have 
\begin{align*}
|E'''\setminus E_{banned}| & \ge 2 \lfloor 2 (3/2)^i \rfloor - 2 \\
    & \ge 2(2(3/2)^i-1)- 2 \ge 2(3/2)^{i+1}\,.
\end{align*}

Finally for $i > \beta$ we upper bound the size of $E_{banned}$
\begin{align*}
|E_{banned}| & \le \sum_{j=1}^{i-\beta} \gamma^22(3/2)^j \le 3\gamma^2\sum_{j=0}^{i-\beta-1} (3/2)^j \\
 &  \le 6\gamma^2 ((3/2)^{i-\beta}-1)  \le \frac{(3/2)^i}{2} - 6\,.
\end{align*}
The first inequality follows from the assumption, that each set $w_e$ is of size at most $\gamma$ and each element of $\Sigma$
is contained in at most $\gamma$ sets $w_e$, hence each of $T_i$ contributes at most $\gamma^2$ edges to $E_{banned}$.
The last inequality follows from the choice of $\beta$ and the assumption $\gamma \ge 1$.
Therefore 
\begin{align*}
|E'''\setminus E_{banned}| & \ge |E'''| - |E_{banned}| \\
    & \ge 2 \lfloor 2 (3/2)^i \rfloor - 2 - (\frac{(3/2)^i}{2} - 6) \\
    & \ge 2(3/2)^{i+1}\,.
\end{align*}
\end{proof}

Observe that by the definition of $E'''$ we have $|V''| \ge |E'''\setminus E_{banned}|$,
but then Claim~\ref{claim:bound} contradicts inequality~(\ref{eq1}).
\end{proof}

\begin{corollary}
\label{cor:decomposition}
Let $H=(V,E)$ be an undirected multigraph with $n$ vertices
and of minimum degree at least $3$.
Assume that each edge $e \in V$ is associated with a subset of an alphabet $w_e \subseteq \Sigma$ 
of size at most $\gamma$, for some $\gamma \ge 1$, such that each element of $\Sigma$ appears in at most $\gamma$ sets $w_e$.
There exists a subgraph $H_0 = (V_0, E_0)$ of $H$, and a path decomposition $(B_i)_{i=1}^{q}$ of 
$H_0$ of width at most $4(\beta+3)$, where $\beta=\lceil \log_{3/2} (12\gamma^2) \rceil$ and 
\begin{itemize}
  \item[(a)] $|E_0| = |V_0| + 1$,
  \item[(b)] $|V_0| \le 4(\log_{3/2} n + 2)$,
  \item[(c)]  for each pair of edges $e_1,e_2 \in E_0$, such that $w_{e_1} \cap w_{e_2} \neq \emptyset$
  there exists a bag $B_i$ for some $1 \le i \le q$, such that all of the endpoints of both $e_1$ and $e_2$ are contained
  in $B_i$,
  \item[(d)] for each edge $uv \in E_0$ the set of indices $\{ i : u,v \in B_i\}$
  is an interval.
\end{itemize}
\end{corollary}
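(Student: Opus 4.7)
The plan is to take the tree $T_0$, root $r_0$, and extra edges $e_1,e_2$ produced by Lemma~\ref{lem:improving-tree}, set $H_0=(V_0,E_0)$ with $E_0 := E(T_0)\cup\{e_1,e_2\}$, and build a path decomposition of $H_0$ by sliding a window of $\beta+2$ consecutive depth layers of $T_0$, augmenting every bag with the (at most four) endpoints of $e_1$ and $e_2$. Properties (a) and (b) are immediate: $T_0$ contributes $|V_0|-1$ edges and the two additional distinct non-tree edges bring the total to $|V_0|+1$, while (b) is restated directly from the lemma.

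For the decomposition, write $L_\ell := \{v\in V_0 : \dist_{T_0}(r_0,v)=\ell\}$, let $D$ be the maximum depth, and let $W$ be the set of endpoints of $e_1,e_2$, so $|W|\le 4$. The key structural observation is $|L_\ell|\le 4$ for every $\ell$: since $T_0$ has at most four leaves and the descendant-leaf sets of distinct vertices at the same depth are disjoint, each layer is bounded by the leaf count. For $j=0,\dots,D$ define
\[
B_j \;=\; W \;\cup\; \bigcup_{\max(0,\,j-\beta-1)\,\le\,\ell\,\le\,j} L_\ell,
\]
a union of $W$ with at most $\beta+2$ full layers, so $|B_j|\le 4+4(\beta+2)=4(\beta+3)$, within the required width budget.

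The remaining checks are routine. A vertex in $L_\ell\setminus W$ lies exactly in the bags with index in $[\ell,\min(D,\ell+\beta+1)]$, while a vertex of $W$ lies in every bag, so each vertex occupies an interval of bags; this gives both the defining interval property of a path decomposition and (d). Every tree edge between $L_{\ell-1}$ and $L_\ell$ is contained in $B_\ell$, and each extra edge is contained in every bag. For (c), if $e,e'\in E_0$ share an alphabet symbol and at least one of them is in $\{e_1,e_2\}$, then its endpoints lie in $W\subseteq B_j$ for every $j$, so any bag containing the endpoints of the other edge works; if both are tree edges, then by the $\Sigma$-nearness of Lemma~\ref{lem:improving-tree} their depths $d\le d'$ satisfy $d'-d\le \beta$, and the bag $B_{d'+1}$ spans the layer window $[d'-\beta,d'+1]$, which covers all four endpoints. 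The delicate balance is the choice of window width: $\beta+2$ layers is tight enough to absorb the $\Sigma$-nearness margin of $\beta$ in a single bag (so that (c) holds for pairs of deep tree edges), yet still leaves room for the four vertices of $W$ under the per-bag budget of $4(\beta+3)$.
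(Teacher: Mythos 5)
Your proposal is correct and follows essentially the same route as the paper: take $T_0$, $r_0$, $e_1$, $e_2$ from Lemma~\ref{lem:improving-tree}, set $E_0 = E(T_0)\cup\{e_1,e_2\}$, and use bags consisting of a sliding window of $\beta+2$ depth layers together with the endpoints of $e_1,e_2$, which is exactly the decomposition the paper constructs. Your verification of (c) via the bag $B_{d'+1}$ and of (d) via the interval structure matches the paper's argument, with somewhat more detail spelled out.
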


\begin{proof}
First, we use Lemma~\ref{lem:improving-tree} to obtain $T_0 = (V_0, E_0)$, $r_0 \in V_0$, such that 
$|V_0| \le 4(\log_{3/2} n + 2)$, where for each pair of edges $e_1,e_2 \in E_0$ such that $w_{e_1} \cap w_{e_2} \neq \emptyset$
we have $|\dist_{T_0}(r_0,e_1) - \dist_{T_0}(r_0,e_2)| \le \beta$.
Let $e_1,e_2 \in E\setminus E_0$ be two edges with both endpoints in $V_0$.
Define $H_0 = (V_0, E_0 \cup \{e_1,e_2\})$, clearly $H_0$ is a subgraph of $H$
and the number of edges is the number of vertices plus one.
Therefore properties $(a)$ and $(b)$ are satisfied and
it remains to show that there exists a path decomposition of $H_0$ of width at most $4(\beta+3)$,
satisfying $(c)$ and $(d)$.

Let $D_i$ be the set of vertices of $V_0$ at distance exactly $i$ from 
$r_0$ in $T_0$.
Consider a sequence $(B_i)_{i=0}^q$, where $q={4(\log_{3/2} n + 2)}$,
and $B_i = \bigcup_{\max(0,i-\beta-1) \le j \le i} D_i \cup e_1 \cup e_2$.
It is straightforward to check that this is in fact a path decomposition of $H_0$,
and since $T_0$ has at most $4$ leaves, this implies that the size of each $D_i$
is upper bounded by $4$, and hence the path decomposition is of width
at most~$4(\beta+3)$.

Observe that property $(c)$ required by the corollary follows from
the last property of Lemma~\ref{lem:improving-tree},
because all of the endpoints of edges $e_1,e_2 \in E_0$, such that $w_{e_1} \cap w_{e_2} \neq \emptyset$,
are contained in $B_{\max(\dist_{T_0}(r_0,e_1)+1, \dist_{T_0}(r_0,e_2)+1)}$.
To prove property $(d)$ let $e = uv$ be an arbitrary edge of $E_0$
and define $I_u = \{i : u \in B_i\}$ and $I_v = \{i : v \in B_i\}$.
As we already know that $(B_i)_{i=0}^q$ is a path decomposition
it follows that both sets $I_u$, $I_v$ form an interval,
hence $I_u \cap I_v$ is also an interval, which proves $(d)$.
\end{proof}

\begin{lemma}
\label{lem:analysis}
Fix an arbitrary $\epsilon > 0$. 
There are constants $c_1, c_2$ (depending on $k$ and $\epsilon$),
such that for any disjoint family $\cF_0 \subseteq \cF$,
for which there is no improving set of size at most $c_1 \log n$ of pathwidth at most $c_2$
we have $|OPT| \le ((k+1)/3+\epsilon) |\cF_0|$,
where $OPT \subseteq \cF$ is a maximum size disjoint subfamily of $\cF$.
\end{lemma}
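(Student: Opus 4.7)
The plan is to classify the sets of $OPT$ by the size of their intersection with $\cF_0$, reduce the approximation ratio to a combinatorial inequality on the low-intersection part, and derive a contradiction from its failure by producing a forbidden improving set via Corollary~\ref{cor:decomposition}. Set $O_j = \{S \in OPT : |N_{G_{\cF_0}}(S)| = j\}$. A singleton $S \in O_0$ would be a size-$1$ improving set, so $O_0 = \emptyset$. Double counting incidences between $\cF_0$ and $OPT$ --- both being disjoint families of sets of size at most $k$ --- yields $\sum_{j \ge 1} j\,|O_j| \le k|\cF_0|$, and combining this with $3|O_j| \le j|O_j|$ for $j \ge 3$ gives
\[
3|OPT| \;\le\; k|\cF_0| + 2|O_1| + |O_2|.
\]
Thus the lemma reduces to proving $2|O_1| + |O_2| \le (1+3\epsilon)|\cF_0|$; the remainder of the proof assumes this fails and exhibits a forbidden improving set.

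Two immediate consequences of the no-improving-set hypothesis tighten the setup: any two sets of $O_1$ attach to distinct vertices of $\cF_0$, and no $O_2$-set connects two vertices both carrying an $O_1$-set (in either case a constant-size improving set of pathwidth $0$ would exist). Build an auxiliary multigraph $H$ on $\cF_0$ (possibly augmented by bookkeeping vertices) by inserting, for each $S \in O_2$ with $N_{G_{\cF_0}}(S) = \{T_1,T_2\}$, an edge $T_1T_2$ labelled by $S$, and by encoding each $S \in O_1$ through a gadget whose contribution to the degree-sum of $H$ reflects the coefficient $2$ of $|O_1|$. Every edge $e$ coming from an OPT-set $S$ is labelled $w_e := S \subseteq U$: disjointness of $OPT$ forces each element of $\Sigma = U$ to lie in at most one $w_e$, and $|w_e| \le k$, so Corollary~\ref{cor:decomposition} applies with $\gamma = k$. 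Iteratively prune vertices of $H$ whose degree falls below $3$, removing at most two incidences per deletion; the $3\epsilon|\cF_0|$ slack in the hypothesised failure of the combinatorial inequality is calibrated so that a non-empty submultigraph $H'$ of minimum degree at least $3$ survives.

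Applying Corollary~\ref{cor:decomposition} to $H'$ yields a subgraph $H_0 = (V_0, E_0)$ with $|E_0| = |V_0| + 1$, $|V_0| = O(\log|\cF|)$, and a path decomposition of width $O(\log k)$. The OPT-sets labelling the edges of $H_0$ (together with any $O_1$-gadget members inside $H_0$) form a set $X \subseteq OPT$ with $|N_{G_{\cF_0}}(X)| \le |V_0| < |E_0| \le |X|$, making $X$ an improving set of size $O(\log|\cF|)$; augmenting each bag of the path decomposition of $H_0$ by the OPT-labels of the edges contained in that bag (consistent by the co-location and interval properties of Corollary~\ref{cor:decomposition}) produces a path decomposition of $G_{\cF_0}[N_{G_{\cF_0}}[X]]$ of constant width. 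For $c_1, c_2$ chosen depending only on $k$ and $\epsilon$, this contradicts the hypothesis of the lemma. The main obstacle will be the gadgetisation of $O_1$-sets and the accompanying pruning accounting: the factor $2$ in $2|O_1|$ must appear faithfully as degree in $H$ so that the surplus in the failed inequality exactly suffices to leave a non-empty min-degree-$3$ submultigraph after pruning, and the translation $H_0 \mapsto X$ must respect OPT-set identities so that parallel gadget-edges do not collapse and deflate $|X|$. Getting the tight $(k+1)/3$ rather than Halld\'orsson's $(k+2)/3$ hinges on this bookkeeping combined with the bounded-pathwidth guarantee of Corollary~\ref{cor:decomposition}, a guarantee unavailable in the unbounded-pathwidth analysis of~\cite{h95,cgm13}.
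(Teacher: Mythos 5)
Your reduction of the lemma to the inequality $2|O_1|+|O_2|\le(1+3\epsilon)|\cF_0|$ is where the argument breaks: that inequality is simply false at a local optimum, so no gadget or pruning scheme can extract an improving set from its failure. Take $\cF_0$ and $OPT$ disjoint families with $|OPT|=|\cF_0|$, each $S\in OPT$ meeting exactly one $T\in\cF_0$ and vice versa (a perfect matching in the conflict graph). Then $O_1=OPT$ and $2|O_1|=2|\cF_0|$, yet no improving set of any size exists, because the matching structure forces $|N_{G_{\cF_0}}(X)|=|X|$ for every $X\subseteq OPT$. So the sufficient condition you reduce to is not implied by the hypothesis of the lemma, and the contradiction you plan to derive from its failure cannot materialize. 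The degree-$1$ sets of $OPT$ are exactly the obstruction the paper singles out (it is stated in Section~\ref{sec:related} as the main difference from Sviridenko and Ward): they contribute with coefficient $2$ to the counting but only $1$ to $|N(X)|$, so they are useless for building improving sets and cannot be bounded directly. The paper's resolution is not a gadget but an iterative peeling: it repeatedly deletes the degree-$1$ part $B_i^1$ of $OPT$ \emph{together with} its neighbourhood in $\cF_0$, maintaining the invariant $|A_0\setminus A_i|=|B_0\setminus B_i|$ so the ratio is unaffected, and only when the surviving degree-$1$ part is smaller than $\epsilon|OPT|$ does it run the counting argument (with the multigraph built solely from the degree-$2$ sets, which is essentially your $O_2$ construction). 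The peeling terminates within $1/\epsilon$ rounds, and an improving set found in the peeled graph must then be \emph{lifted} back through the rounds by re-adding the deleted degree-$1$ sets.

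This lifting is also why your choice $w_e=S$, $\gamma=k$ cannot survive: since the sets of $OPT$ are pairwise disjoint, all your labels are disjoint and property (c) of Corollary~\ref{cor:decomposition} becomes vacuous, whereas the final improving set $Y_0$ contains, besides the degree-$2$ sets of the peeled graph, all the degree-$1$ sets re-added during lifting; to certify that $N_{\cg}[Y_0]$ still has constant pathwidth the paper must take $w_e$ to be the ball of radius $2/\epsilon$ around the corresponding $OPT$-set in the conflict graph (so $\gamma=2k^{2/\epsilon}(1/\epsilon+1)$). A smaller point: reducing to minimum degree $3$ by deleting vertices of degree at most $2$ does not work from average degree $2+2\epsilon$ (deleting a degree-$2$ vertex removes one vertex but two edges and can cascade until nothing is left); the paper instead deletes only degree-$\le 1$ vertices and \emph{contracts} maximal degree-$2$ paths of length at least $1/\epsilon$, which is also what bounds the blow-up from an edge of $H_0$ back to a logarithmic-size set $X$.
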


\begin{proof}
Let $C = \cF_0 \cap OPT$ and denote $A_0 = \cF_0 \setminus C$, $B_0 = OPT \setminus C$.
Let $G_0$ be the subgraph of $G_{\cF_0}$ induced by $A_0 \cup B_0$.
We are going to construct a sequence of at most $1/\epsilon$ subgraphs of $G_0$,
namely $G_i = G_0[A_i \cup B_i]$ for $i \ge 1$, where $A_i \subseteq A_0$, $B_i \subseteq B_0$, 
satisfying two invariants:
\begin{itemize}
 \item[(a)] in $G_i$ there is no subset $X \subseteq B_i$ of size at most $2(k+1)^{1/\epsilon - i}$, such that $|N_{G_i}(X)| < |X|$,
 \item[(b)] $|A_0 \setminus A_i| = |B_0 \setminus B_i|$.
\end{itemize}
Observe $G_0$ trivially satisfies $(b)$
and in order to make $G_0$ satisfy $(a)$ 
it is enough to set $c_1$ and $c_2$ so that
\begin{align*}
c_1 & \ge 2(k+1)^{1/\epsilon}\,, \\
c_2 & \ge 4(k+1)^{1/\epsilon}\,,
\end{align*}
as there is no improving set of size at most $2(k+1)^{1/\epsilon}$
and pathwidth of an improving set of size $x$ is at most $2x$.
Consider subsequent values of $i$ starting from $0$.
Split the vertices of $B_i$ into groups $B_i^1, B_i^2, B_i^3$, 
consisting of vertices of $B_i$ of degree exactly one, exactly two and at least three in $G_i$, respectively.
Observe that because of $(a)$ there is no isolated vertex of $B_i$ in $G_i$
and moreover no two vertices of $B_i^1$ have a common neighbour in $G_i$.
Consider the following two cases:
\begin{itemize}
  \item $|B_i^1| \ge \epsilon |OPT|$:
  in this case we construct a graph $G_{i+1}=G_0[A_{i+1} \cup B_{i+1}]$, where
  $A_{i+1} = A_i \setminus N_{G_i}(B_i^1)$ and $B_{i+1} = B_i^2 \cup B_i^3 = B_i \setminus B_i^1$. 
  The invariant $(a)$ is satisfied, as any set $X \subseteq B_{i+1}$ of size at most $2(k+1)^{1/\epsilon-i-1}$ 
  such that $|N_{G_{i+1}}(X)| < |X|$ would imply existence of 
  a set $X' = X \cup (N_{G_i}(N_{G_i}(X)) \cap B_i^1)$ of size at most $(k+1) \cdot |X| \le 2(k+1)^{1/\epsilon -i}$,
  such that $|N_{G_{i}}(X')| < |X'|$ (see Figure~\ref{fig1}).
  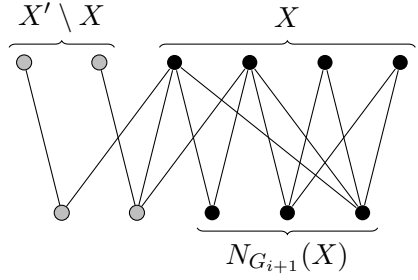
\begin{figure}
  \begin{center}
  \begin{tikzpicture}[scale=1]
  \tikzstyle{vertex}=[circle,fill=black,minimum size=0.20cm,inner sep=0pt]
  \tikzstyle{vertex2}=[circle,draw=black,fill=gray!50,minimum size=0.20cm,inner sep=0pt]
  \tikzstyle{terminal}=[rectangle,draw=black,fill=white,minimum size=0.2cm,inner sep=0pt]

	\foreach \x in {2,3,4,5}
	{
		\node[vertex] (a\x) at (\x,2){};
	}
	\node[vertex2] (a0) at (0,2){};
	\node[vertex2] (a1) at (1,2){};

	\foreach \x in {2,3,4}
	{
		\node[vertex] (b\x) at (\x+0.5,0){};
	}
	\node[vertex2] (b0) at (0.5,0){};
	\node[vertex2] (b1) at (1.5,0){};

	\draw (a1) -- (b1);
	\draw (a0) -- (b0);
	\draw (a2) -- (b2);
	\draw (a2) -- (b4);
	\draw (a2) -- (b0);
	\draw (a2) -- (b1);
	\draw (a3) -- (b1);
	\draw (a3) -- (b4);
	\draw (a3) -- (b3);

\draw (a3) -- (b2);
\draw (a4) -- (b4);
\draw (a4) -- (b3);
\draw (a5) -- (b4);
\draw (a5) -- (b3);

\draw[snake=brace] (1.8,2.2) -- (5.2,2.2);
\draw (3.5,2.6) node {$X$};

\draw[snake=brace] (-0.2,2.2) -- (1.2,2.2);
\draw (0.5,2.6) node {$X' \setminus X$};

\draw[snake=brace] (4.7,-0.2) -- (2.3,-0.2);
\draw (3.5,-0.6) node {$N_{G_{i+1}}(X)$};

\end{tikzpicture}
  \end{center}
  \caption{Lifting an improving set $X$ in $G_{i+1}$ to an improving set $X'$ in $G_{i}$. Gray vertices belong to $G_i$ but not to $G_{i+1}$.}
  \label{fig1}
  \end{figure}
  \item $|B_i^1| < \epsilon |OPT|$:
  We are going to use the following claim, which we prove later.
  \begin{claim}
  \label{claim:new}
  $$|B_i^2| \le (1+\epsilon) |A_i|$$
  \end{claim}

  As each vertex of $A_i$ is of degree at most $k$ in $G_i$,
  the number of edges of $G_i$ is at most $k|A_i|$.
  At the same time the number of edges of $G_i$ is at least 
  $|B_i^1|+2|B_i^2|+3|B_i^3|$, therefore
  \begin{align*}
   |B_i^1| + 2|B_i^2| + 3|B_i^3| &\le k|A_i|\,. 
   \end{align*}

  Note that summing the inequalities:
  \begin{align*}
  |B_i^1| &\le  \epsilon |A_i| \\
  |B_i^1| &\le \epsilon |A_i| \\
  |B_i^2| &\le  (1+\epsilon) |A_i| \\
   |B_i^1| + 2|B_i^2| + 3|B_i^3| &\le k|A_i| 
   \end{align*}
   we obtain

   \begin{align*}
   |B_i| \le ((k+1)/3 + \epsilon)|A_i|\,. 
   \end{align*}
   However $|OPT \setminus B_i| = |C| + |B_0 \setminus B_i| = |C| + |A_0 \setminus A_i| = |\cF_0 \setminus A_i|$, 
   where the second equality follows from invariant $(b)$,
   hence $|OPT| \le ((k+1)/3 + \epsilon)|\cF_0|$.
\end{itemize}
In the second case we have proved the thesis, while the first case can 
appear only $1/\epsilon$ number of times, as in each step we remove at least $\epsilon |OPT|$ vertices from $B_i$.
Therefore to finish the proof of Lemma~\ref{lem:analysis} it suffices to prove Claim~\ref{claim:new}.
\end{proof}

\begin{proof}[Proof of Claim~\ref{claim:new}]
Assume the contrary.
Construct a multigraph $H=(A_i, E_H)$, where $E_H = \{ e_x=uv : x \in B_i^2, N_{G_i}(x) = \{u,v\} \}$.
Set $\Sigma = \cF$ and for each edge $e_x=uv \in E_H$,  set as $w_{e_x}$ the set of
all vertices of $G_0$ at distance at most $2/\epsilon$ from $x$.
Observe that since $G_0$ is of maximum degree at most $k$, we have $|w_{e_x}| \le 2k^{2/\epsilon}$.
For the same reason each vertex of $G_0$ appears in at most $2k^{2/\epsilon}$ sets $w_{e_x}$.

In order to use Corollary~\ref{cor:decomposition} we need to reduce the graph $H$,
in a way ensuring all its vertices are of degree at least $3$.
However we know, that the graph $H$ is of average degree at least $2+2\epsilon$, since $|E_H|/|A_i| = |B_i^2|/|A_i| \ge 1+\epsilon$.
Let $H'=H$. As long as there exist an isolated vertex, or a vertex of degree one in $H'$ remove it.
Note that such a reduction rule does not decrease the average degree of $H'$.
Similarly if $H'$ contains a path $v_0,v_1,\ldots,v_{\ell},v_{\ell+1}$,
where all vertices $v_j$ for $1 \le j \le \ell$ are of degree exactly $2$ and $\ell \ge 1/\epsilon$,
then remove all the vertices $v_j$ for $1 \le j \le \ell$ from $H'$.
As this operation removes $\ell$ vertices, but only $\ell+1$ edges, and $\ell \ge 1/\epsilon$,
the average degree does not decrease.
Finally, we construct $H''$ from $H'$ by simultaneously considering
all the maximal paths $v_0,v_1,\ldots,v_{\ell},v_{\ell+1}$, with all internal vertices of degree two,
and contracting each of such paths to a single edge $e'=v_0v_{\ell+1}$ and setting $w_{e'}=\bigcup_{0 \le j \le \ell} w_{v_jv_{j+1}}$. 
Observe that for each edge $e$ of $H''$ the size of $w_e$ is upper bounded by $2k^{2/\epsilon}(1/\epsilon+1)$,
as a contracted path consist of at most $\lfloor 1/\epsilon+1 \rfloor$ edges.

As $H''$ is of minimum degree at least $3$, we apply Corollary~\ref{cor:decomposition} 
to it, where $\gamma = 2k^{2/\epsilon}(1/\epsilon+1)$.
Let $H_0 = (V_0,E_0)$ and $\pathdecomp=(B_i)_{i=1}^q$ be
as defined in Corollary~\ref{cor:decomposition}.
Let $X \subseteq B_i^2$ be the set of all the vertices of $B_i^2$ corresponding to the edges of $E_0$,
including the vertices of $B_i^2$ that correspond to edges of $H'$ that were contracted into
some edge of $E_0$ (see Figure~\ref{fig2}).
As $|E_0| > |V_0|$ we have $|N_{G_i}(X)| < |X|$.
Clearly $X$ is of size at most $|E_0|(1/\epsilon+1) \le (4(\log_{3/2}|\cF|+2)+1)(1/\epsilon+1)$,
that is logarithmic in $|\cF|$, as $\epsilon$ is a constant.
It remains to show that we can lift $X$ to an improving set
of bounded pathwidth, while increasing its size only by a constant factor.

\begin{figure}
\begin{center}
\begin{tikzpicture}[scale=0.85]
  \tikzstyle{vertex}=[circle,fill=black,minimum size=0.20cm,inner sep=0pt]
  \tikzstyle{vertex2}=[circle,draw=black,fill=gray!50,minimum size=0.20cm,inner sep=0pt]
 	\tikzstyle{vertex3}=[circle,draw=black,fill=white,minimum size=0.20cm,inner sep=0pt]
 
  \tikzstyle{terminal}=[rectangle,draw=black,fill=white,minimum size=0.2cm,inner sep=0pt]

\node[vertex2] (a0) at (0,2){};
\node[vertex2] (a1) at (1,2){};
\node[vertex] (a2) at (2,2){};
\node[vertex] (a3) at (3,2){};
\node[vertex] (a4) at (4,2){};
\node[vertex] (a5) at (5,2){};
\node[vertex] (a6) at (6,2){};

\node[vertex2] (b0) at (0.5,0){};
\node[vertex2] (b1) at (1.5,0){};
\node[vertex] (b2) at (2.5,0){};
\node[vertex] (b3) at (3.5,0){};
\node[vertex] (b4) at (4.5,0){};
\node[vertex] (b5) at (5.5,0){};

\draw (a5) -- (b4) -- (a6) -- (b5) -- (a5);
\draw[dashed] (b4) -- (a4) -- (b3) -- (a3) -- (b2);
\draw (b2)-- (a2) -- (b4);

\draw (a0) -- (b0);
\draw (a1) -- (b1);
\draw (b0) -- (a2) -- (b1) -- (a4);	

\draw[snake=brace] (1.8,2.2) -- (6.2,2.2);
\draw (4,2.6) node {$Y_i=X$};

\draw[snake=brace] (-0.2,2.2) -- (1.2,2.2);
\draw (0.5,2.6) node {$Y_{i-1} \setminus Y_i$};

\draw[snake=brace] (5.7,-0.7) -- (2.3,-0.7);
\draw (4,-1.2) node {$N_{G_{i}}(Y_i)$};

\draw (2.5,-0.4) node {$a$};
\draw (4.5,-0.4) node {$b$};
\draw (5.5,-0.4) node {$c$};

\node[vertex] (a) at (7,1){};
\node[vertex] (b) at (8,1){};
\node[vertex] (c) at (9,1){};
\draw (a) edge[bend left] (b);
\draw[dashed] (a) edge[bend right] (b);
\draw (b) edge[bend left] (c);
\draw (b) edge[bend right] (c);
\draw (7,0.6) node {$a$};
\draw (8,0.6) node {$b$};
\draw (9,0.6) node {$c$};
\draw (8,1.5) node {$H_0$};

\end{tikzpicture}
\end{center}
\caption{The right graph is $H_0=(V_0,E_0)$ provided by Corollary~\ref{cor:decomposition}.
  The left side depicts the set $X$ corresponding to $E_0$, as well as lifting the set $Y_i=X$ to $Y_{i-1}$. Gray vertices belong to $G_{i-1}$ but not to $G_{i}$. The dashed path on the left between $a$ and $b$ in $H'$ is contracted into an edge of $H''$ on the right.}
\label{fig2}
\end{figure}
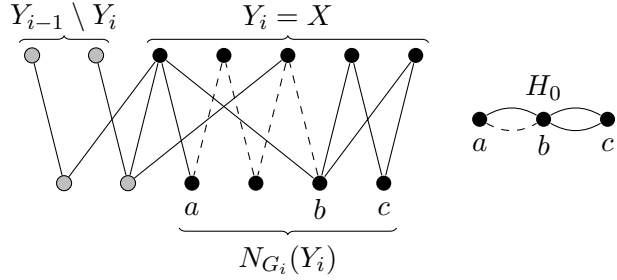

Let $Y_i = X$. For $j=i-1,\ldots,0$ set $Y_j = Y_{j+1} \cup (N_{G_j}(N_{G_j}(Y_j)) \cap B_j^1)$ (see Figure~\ref{fig2}).
Observe that at each step the size of $Y_j$ increases by a factor of at most $k+1$,
hence $|Y_0| \le |Y_i|(k+1)^i$ and moreover $Y_0$ is an improving set w.r.t. $\cF_0$.
Since $Y_0$ is of size logarithmic in $|\cF|$ it remains to show that $N_{\cg}[Y_0]$
is of constant pathwidth.

Create a sequence of subsets $\pathdecomp'=(B_i')_{i=1}^q$, by taking as $B_i'$
the set $(\bigcup_{e=uv \in E_0, u,v \in B_i} w_e \cap N_{\cg}[Y_0])$.
The size of each $B_i'$ is at most $(w+1)^2\gamma$, where $w$
is the width of $\pathdecomp$, hence it remains to show that $\pathdecomp'$
is indeed a path decomposition.
Each vertex of $N_{\cg}[Y_0]$ is within distance at most $2/\epsilon$
from some vertex of $X$, hence each vertex of $N_{\cg}[Y_0]$ 
is contained in some set $w_e$ for $e \in E_0$.
Similarly each edge of $\cg[N_{\cg}[Y_0]]$ is within distance
at most $2/\epsilon$ from some vertex of $X$, so it has both its endpoints
in some set $w_e$ for $e \in E_0$.
Since $\pathdecomp$ is a path decomposition each edge $e \in E_0$
has both its endpoints in some bag~$B_i$, therefore
$\bigcup_{1 \le i \le q} B_i' = N_{\cg}[Y_0]$
and each edge of $N_{\cg}[Y_0]$ has both its endpoints in some bag $B_i'$.
 Property (d) of Corollary \ref{cor:decomposition} implies
 that each $w_e$ contributes to $B_i'$ for values of $i$ forming
 an interval $I_e$. Moreover if for two edges $e_1, e_2 \in E_0$
 the intersection $w_{e_1} \cap w_{e_2}$ is non-empty, then
 by property (c) of Corollary~\ref{cor:decomposition} we know
 that the intervals $I_{e_1}$ and $I_{e_2}$ have non-empty intersection.
 This ensures that each vertex $v$ of $N_{\cg}[Y_0]$ appears in a set of bags $B_i'$
 forming an interval in the sequence $\pathdecomp'$,
 as each pair of intervals in $\{I_{e} : v \in w_e\}$ has non-empty intersection.

Therefore $Y_0$ is an improving set of logarithmic size and 
of constant pathwidth, which is a contradiction. Consequently
$|B_i^2| \le (1+\epsilon) |A_i|$, which finishes 
the proof of Claim~\ref{claim:new}.
\end{proof}

Lemma~\ref{lem:analysis} together with the algorithm
for searching improving sets of bounded pathwidth from Theorem~\ref{thm:search}
gives a polynomial time $(k+1+\epsilon)/3$-approximation algorithm
for \kSP for any constant $k$, proving Theorem~\ref{thm:main}.
In particular there is a $(4/3+\epsilon)$-approximation for 
the \tDM problem.

\section{Local search hardness}
\label{sec:hardness}

In this section we are going to show, that there is no 
algorithm verifying for a given $\cF_0 \subseteq \cF$, whether there exists 
an improving set (see Definition~\ref{def:improving}) of size at most $r$
in $f(r)\poly(|\cF|)$ time, even when $k=3$.
In fact we show a stronger hardness result, 
ruling out existence of an algorithm, that either 
finds a bigger disjoint family $\cF_1$ (without any restriction
on its distance from $\cF_0$), or verifies that there is no
improving set of size at most $r$.
That is exactly the notion of {\em permissive} parameterized local search
introduced by Marx and Schlotter in~\cite{permissive-ls} 
(for more information about parameterized local search see~\cite{fellows-ls,gaspers-ls,ls-survey}).

In our reduction, we use a standard W[1]-hard problem~\cite{mc-hardness}, namely \MC 
parameterized by the clique size.

\defproblem{\MC}{An undirected graph $G=(V,E)$, a positive integer $k$, and a color function $c : V \to \{0,\ldots,k-1\}$.}
{Does the graph $G$ contain a clique of size $k$, where each vertex is of different color?}

\begin{theorem}
\label{thm:w1-hardness}
There is a constant $\alpha > 0$, such that
given an instance $I=(G,k,c)$ of \MC one can in polynomial time construct
an instance $\cF \subseteq 2^U$ of \tSP, together with a disjoint subfamily $\cF_0 \subseteq \cF$ of size $|U|/3-1$, such that:
\begin{itemize}
  \item If $I$ is a YES-instance, then there exists a family $\cF_1 \subseteq \cF$
  of disjoint $|U|/3$ sets, such that $|\cF_0 \setminus \cF_1| + |\cF_1 \setminus \cF_0| \le \alpha k^2$,
  \item if there exists a disjoint subfamily $\cF_1 \subseteq \cF$ of size $|U|/3$,
  then $I$ is a YES-instance.
\end{itemize}
\end{theorem}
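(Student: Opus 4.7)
The plan is to reduce from the W[1]-hard \MC problem. Given an instance $(G,k,c)$ of \MC with $n = |V(G)|$, I will construct in polynomial time a \tSP instance $\cF \subseteq 2^U$ together with a disjoint subfamily $\cF_0 \subseteq \cF$ of size $|U|/3-1$ such that a perfect packing $\cF_1 \subseteq \cF$ of size $|U|/3$ exists if and only if $(G,k,c)$ is a YES-instance, and moreover any such $\cF_1$ can be chosen to satisfy $|\cF_0 \setminus \cF_1| + |\cF_1 \setminus \cF_0| \le \alpha k^2$ for an absolute constant $\alpha$. The universe $U$ is partitioned into a large \emph{padding} part and a \emph{clique-encoding} region. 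The padding part is a disjoint union of triples, each covered by a unique triple set of $\cF$; these sets appear identically in $\cF_0$ and in every perfect packing, so they contribute nothing to the symmetric difference.

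The clique-encoding region contains, for each vertex $v$ of $G$ and each color $j \neq c(v)$, a vertex-slot element $\mathrm{vs}_j(v)$; for each edge $e$ of $G$ and each endpoint $v \in e$, an edge-slot element $\mathrm{es}_v(e)$; and, for each ordered pair of distinct colors $(i,j)$, a single port element $p_{i,j}$. The central mechanism is a family of \emph{choice triples} in $\cF$ of the form $\{\mathrm{vs}_j(v), \mathrm{es}_v(e), p_{i,j}\}$, added for every incidence $(v,e)$ with $c(v)=i$, $v\in e$, and the other endpoint of $e$ of color $j$. Since each port can belong to at most one set of a packing, using a choice triple simultaneously commits to a vertex $v$ on the $\{i,j\}$ side and to an edge $e$ under the built-in constraint $v \in e$; with all $k(k-1)$ ports forced to be covered by choice triples, the $k$ implied vertex selections and $\binom{k}{2}$ edge selections must form a multicolored $k$-clique.

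To keep the symmetric difference small, every slot and every port element is additionally equipped with a private pair of auxiliary padding elements that form its own \emph{dummy} triple, and each dummy triple is added to $\cF$. The canonical near-packing $\cF_0$ consists of the padding sets together with all dummy triples, except for one designated \emph{sink} triple that is omitted; this gives $|\cF_0| = |U|/3 - 1$. Given a multicolored $k$-clique $v_1,\ldots,v_k$ with clique edges $e_{ij}$, I obtain $\cF_1$ by replacing, for every ordered pair $(i,j)$, the three dummies of $\mathrm{vs}_j(v_i)$, $\mathrm{es}_{v_i}(e_{ij})$ and $p_{i,j}$ by the single choice triple $\{\mathrm{vs}_j(v_i), \mathrm{es}_{v_i}(e_{ij}), p_{i,j}\}$, and then covering the freed auxiliary elements and the sink elements by a pre-designed family of \emph{redistribution} triples in $\cF$. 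A direct count then gives $|\cF_0 \setminus \cF_1| + |\cF_1 \setminus \cF_0| = O(k^2)$, yielding the required constant $\alpha$.

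For the backward direction, any perfect packing $\cF_1$ covers every port element, so each $p_{i,j}$ must lie either in its own dummy triple or in a choice triple. A well-designed redistribution gadget around the sink will enforce that no perfect packing retains every port dummy: covering the sink triple consumes auxiliary elements that can be freed only by removing specific dummies, which in turn forces a coherent set of choice triples to be selected. Extracting the multicolored clique from $\cF_1$ is then routine: the vertex committed in slot $\mathrm{vs}_j(\cdot)$ is read off from the unique choice triple covering $p_{i,j}$, and port consistency across color pairs forces these $k$ vertices to be pairwise joined by chosen edges. The hard part will be the design of the redistribution gadget around the sink so that (i) freeing the port dummies is the unique way to perfectly cover the sink in any packing, and (ii) the local repacking of freed auxiliary elements can itself be done using only $O(k^2)$ sets from $\cF$; this requires careful balance of auxiliary-element counts, but is standard once the port-and-choice structure above is in place.
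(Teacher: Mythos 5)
Your high-level plan (reduce from \MC, pad the universe with forced triples, and make the clique choice a local $O(k^2)$-size exchange around one uncovered ``sink'') matches the paper's strategy, but two essential pieces are missing, and one of them breaks soundness outright. The choice triples $\{\mathrm{vs}_j(v), \mathrm{es}_v(e), p_{i,j}\}$ do not enforce consistency of the vertex selection: the slots $\mathrm{vs}_j(v)$ and $\mathrm{vs}_{j'}(v')$ are distinct elements for distinct $v,v'$ of the same color $i$, and the slots $\mathrm{es}_u(e)$ and $\mathrm{es}_v(e)$ of the two endpoints of an edge are also distinct, so covering all $k(k-1)$ ports by choice triples only certifies, for each ordered pair $(i,j)$, the existence of \emph{some} edge between color classes $i$ and $j$ --- possibly with a different vertex of color $i$ used for each $j$, and with the edges chosen at $p_{i,j}$ and $p_{j,i}$ unrelated. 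A triangle-free graph having an edge between every two color classes would then admit a perfect packing without containing any multicolored clique, contradicting the second bullet of the theorem. Your closing remark that ``port consistency across color pairs forces these $k$ vertices to be pairwise joined'' is exactly the property your gadgets do not provide.

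Second, the step you defer as ``standard'' --- a redistribution gadget around the sink that forces \emph{every} perfect packing to give up all $\Theta(k^2)$ port dummies --- is in fact the heart of the proof. The paper solves both problems simultaneously with \emph{amplifiers}: binary-tree families $\{\{x_i,x_{2i},x_{2i+1}\} : 1 \le i < k\}$ whose root element $x_1$ is coverable only by the root set, which cascades level by level and forces all $k$ leaves to be exposed at once (this is the global forcing you need from a single uncovered element), while each vertex $v$ gets its own amplifier with exactly two perfect states, so that selecting $v$ exposes all $k-1$ of $v$'s edge-interface elements simultaneously (this supplies the per-color consistency your slots lack; shared $s_{(i,j)}$ elements then tie the two endpoints of each chosen edge together). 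Each amplifier has $O(k)$ sets and switching its state costs $O(k)$ exchanges, which is what keeps the total symmetric difference at $O(k^2)$. Without an explicit gadget of this kind your construction neither forces the choice structure to activate nor extracts a clique when it does, so the proposal as written does not establish the theorem.
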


\begin{proof}
We start with a definition of a simple gadget, that will be used a couple of times
in the construction.
\begin{definition}
For a positive integer $h \ge 1$ and a symbol $x$ an $(x,h)$-amplifier is a family $\cF_x \subseteq 2^{U_x}$ of sets of size~$3$, where 
\begin{align*}
U_x & = \{x_1,\ldots,x_{2\cdot 4^h-1}\}, \textrm{ and} \\
\cF_x & = \{\{x_i,x_{2i},x_{2i+1}\} : 1 \le i < 4^h\}
\end{align*}
\end{definition}

Let $I=(G=(V,E),k,c)$ be an instance of \MC.
W.l.o.g. we may assume that $k=4^h$, where $h$ is a positive integer,
since otherwise we may add universal vertices (adjacent to all other vertices).
We start with constructing an $(x,h)$-amplifier, which will
be called the {\em top amplifier},
and $(v,h)$-amplifier for each $v \in V$, called {\em vertex amplifiers}.
As the universe $U$ we take
\begin{align*}
U = & U_x \cup (\bigcup_{v \in V} U_v) \cup \{v_1', v_1'' : v \in V\} \cup \{s_{(i,j)} : 0 \le i < j < k\} \cup \{\ell_{i} : 1 \le i \le 2k\}\,.
\end{align*}
To the family $\cF$ we add all the sets of $\cF_x$ and $\cF_v$ for $v \in V$, as well as:
\begin{itemize}
  \item[(i)] sets $\{v_1, v_1', v_1''\}$ for $v \in V$,
  \item[(ii)] sets $\{x_{k+i},v_1',v_1''\}$ for $0 \le i < k$ for $v \in c^{-1}(i)$,
  \item[(iii)] sets $\{u_{k+c(v)}, v_{k+c(u)}, s_{(c(u),c(v))}\}$ for $uv \in E$, $c(u) < c(v)$,
  \item[(iv)] sets $\{v_{k+c(v)}, \ell_{2c(v)-1}, \ell_{2c(v)}\}$ for $v \in V$,
  \item[(v)] sets $\{\ell_{3i-2}, \ell_{3i-1}, \ell_{3i}\}$ for $1 \le i \le \lfloor 2k/3 \rfloor$ (note that $2k = 2 \cdot 4^h \equiv 2 \pmod 3$),
  \item[(vi)] consider all the elements $s_{(i,j)}$ in lexicographic order of pairs $(i,j)$,
  take subsequent triples of elements and add them to the family $\cF$, that is
  add sets 
  \begin{align*}
  \{&s_{(0,1)}, s_{(0,2)}, s_{(0,3)}\},\ldots, \{s_{(k-3,k-2)}, s_{(k-3,k-1)}, s_{(k-2,k-1)}\}
  \end{align*}
  (note that $\binom{k}{2} \equiv 0 \pmod 3$, since $(k-1) \equiv 0 \pmod 3$).
\end{itemize}

To finish the construction we create a disjoint family $\cF_0$ of size $|U|/3-1$ as follows:
\begin{itemize}
  \item add to $\cF_0$ sets $\{x_i, x_{2i}, x_{2i+1}\} \in \cF_x$ for $1 \le i < k$ such that $\lfloor \log_2 i \rfloor$ is odd.
  \item add to $\cF_0$ sets $\{v_i, v_{2i}, v_{2i+1}\} \in \cF_v$ for $v \in V$ and $1 \le i < k$, such that $\lfloor \log_2 i \rfloor$ is odd.
  \item add to $\cF_0$ all the sets from points (i), (v), (vi) of the construction of $\cF$.
\end{itemize}

Note that the size of $\cF_0$ equals $|U|/3-1$, as the only elements which are not covered by $\cF_0$ are $x_1$, $\ell_{2k-1}$ and $\ell_{2k}$.

\begin{claim}
\label{claim:1}
If $I$ is a YES-instance, then there exists a disjoint family $\cF_1 \subseteq \cF$ of size $|U|/3$,
such that $|\cF_1 \setminus \cF_0| + |\cF_0 \setminus \cF_1|= \Oh(k^2)$.
\end{claim}

\begin{proof}
Let $K \subseteq V$ be a solution to $I$, that is a multicolored clique of size $k$.
Construct a disjoint family $\cF_1$ as follows:
\begin{itemize}
\item[(a)] add to $\cF_1$ sets $\{x_i, x_{2i}, x_{2i+1}\} \in \cF_x$ for each $1 \le i < k$, such that $\lfloor \log_2 i \rfloor$ is even,
\item[(b)] add to $\cF_1$ sets $\{v_i, v_{2i}, v_{2i+1}\} \in \cF_x$ for $v \in K$ and $1 \le i < k$, such that $\lfloor \log_2 i \rfloor$ is even,
\item[(c)] add to $\cF_1$ sets $\{v_i, v_{2i}, v_{2i+1}\} \in \cF_x$ for $v \in V \setminus K$ and $1 \le i < k$, such that $\lfloor \log_2 i \rfloor$ is odd,
\item[(d)] for $0 \le i < k$ add to $\cF_1$ the set $\{x_{k+i}, v_1', v_1''\}$, where $v$ is the unique vertex of $K$ of color $i$,
\item[(e)] add to $\cF_1$ sets $\{v_1, v_1', v_1''\}$ for $v \in V \setminus K$,
\item[(f)] add to $\cF_1$ sets $\{u_{k+c(u)}, v_{k+c(v)}, s_{c(u),c(v)}\}$ for $u,v \in K$, $c(u) < c(v)$,
\item[(g)] add to $\cF_1$ sets $\{v_{k+c(v)}, \ell_{2c(v)-1}, \ell_{2c(v)}\}$ for $v \in K$.
\end{itemize}
A direct check shows that the above family is disjoint and covers all the elements of $U$, hence $|\cF_1| = |U|/3$.
Note that in the above construction of $\cF_1$ in each of the points (a), (d), (g) we add to $\cF_1$ only $\Oh(k)$ sets,
while in points (b), (f) we add to $\cF_1$ $\Oh(k^2)$ sets, whereas
in points (c) and (e) we add to $\cF_1$ sets that are present in $\cF_0$. 
Therefore the number of sets of $\cF_1$ which are not present in $\cF_0$ is upper bounded by a linear function in $k^2$.
\end{proof}

\begin{claim}
\label{claim:2}
If there exists a disjoint family $\cF_1$ of size $|U|/3$, then
$I$ is a YES-instance.
\end{claim}

\begin{proof}
Let $\cF_1 \subseteq \cF$ be any disjoint family of size $|U|/3$.
Since the element $x_1$ can be covered only by the set $\{x_1,x_2,x_3\}$, the family $\cF_1$ contains
all the sets $\{x_i, x_{2i}, x_{2i+1}\} \in \cF_x$ for $1 \le i < k$, where $\lfloor \log_2 i \rfloor$ is even,
and consequently elements $x_{k+i}$ for $0 \le i < k$ are not covered by sets of $\cF_x$.
Therefore elements $x_{k+i}$ are covered by sets
from point (ii) of the construction of $\cF$,
hence for each $0 \le i < k$ in $\cF_1$ there is exactly one set $\{v_1,v_2,v_3\} \in \cF_1$ for $v \in c^{-1}(i)$,
and let $K$ be the set of those $k$ multicolored vertices.

We want to show that $K$ is a clique.
As for each $v \in K$ we have $\{v_1,v_2,v_3\} \in \cF_1$, 
the family $\cF_1$ contains all the sets $\{v_i,v_{2i},v_{2i+1}\}$
for $1 \le i < k$ where $\lfloor \log_2 i \rfloor$ is even.
Consequently elements $v_{k+i}$ for $0 \le i < k$, $i \neq c(v)$
are covered by sets from point (iii) of the construction of $\cF$.
Consider any pair $0 \le i < j < k$.
Denote as $u$ the unique vertex of $K \cap c^{-1}(i)$
and let $\{u_{k+j},v_{k+i},s_{(i,j)}\}$ be the set of $\cF_1$ covering $u_{k+j}$, where $v \in c^{-1}(j)$.
This implies that $v_{k+i}$ is not covered by a set of the $(v,h)$-amplifier,
hence $v_{1}$ is covered by the $(v,h)$-amplifier, i.e. by $\{v_1,v_2,v_3\}$.
Therefore $v \in K$ and the vertices of colors $i$ and $j$ of $K$
are adjacent.
Since $i$ and $j$ were selected arbitrarily, $K$ is a clique.
\end{proof}

The proof of Theorem~\ref{thm:w1-hardness} follows from Claim~\ref{claim:1} and Claim~\ref{claim:2}.
\end{proof}

Theorem~\ref{thm:w1-hardness}, together with the well-known fact that \MC
is W[1]-hard~\cite{mc-hardness} implies Theorem~\ref{thm:w1-intro}.


\section{Future work and open problems}
\label{sec:conclusions}

One can try to continue the research direction
of Chan and Lau~\cite{lau}, who presented a
strengthening of the standard LP relaxation,
proving integrality gap of $(k+1)/2$
using a local search inspired analysis.
We would like to ask a question whether it is possible
to obtain some strengthened LP relaxation with integrality gap
$(k+c)/3$-for some constant~$c$.

Finally, we believe that it is worth looking into 
other problems, where local search algorithms were
applied successfully, such as {\sc $k$-Median}~\cite{k-median}
or {\sc Restricted Max-Min Fair Allocation}~\cite{svensson}.
A potential goal would be to design improved approximation local search algorithms
using non-constant size swaps in the spirit of the framework of this paper.

\section*{Acknowledgements}

We would like to thank Marcin Mucha for helpful discussions.

\bibliographystyle{abbrv}
\bibliography{3dm}

\begin{thebibliography}{10}

\bibitem{color-coding}
N.~Alon, R.~Yuster, and U.~Zwick.
\newblock Color-coding.
\newblock {\em Journal of ACM}, 42(4):844--856, 1995.

\bibitem{arkin-hassin}
E.~M. Arkin and R.~Hassin.
\newblock On local search for weighted $k$-set packing.
\newblock In {\em ESA}, pages 13--22, 1997.

\bibitem{k-median}
V.~Arya, N.~Garg, R.~Khandekar, A.~Meyerson, K.~Munagala, and V.~Pandit.
\newblock Local search heuristics for $k$-median and facility location
  problems.
\newblock {\em SIAM J. Comput.}, 33(3):544--562, 2004.

\bibitem{berman}
P.~Berman.
\newblock A {\it d}/2 approximation for maximum weight independent set in {\it
  d}-claw free graphs.
\newblock In {\em SWAT}, pages 214--219, 2000.

\bibitem{berman-furer}
P.~Berman and M.~F{\"u}rer.
\newblock Approximating maximum independent set in bounded degree graphs.
\newblock In {\em SODA}, pages 365--371, 1994.

\bibitem{berman-karpinski}
P.~Berman and M.~Karpinski.
\newblock Improved approximation lower bounds on small occurrence optimization.
\newblock {\em Electronic Colloquium on Computational Complexity (ECCC)},
  10(008), 2003.

\bibitem{lau}
Y.~H. Chan and L.~C. Lau.
\newblock On linear and semidefinite programming relaxations for hypergraph
  matching.
\newblock {\em Math. Program.}, 135(1-2):123--148, 2012.

\bibitem{chandra-halldorson}
B.~Chandra and M.~M. Halld{\'o}rsson.
\newblock Greedy local improvement and weighted set packing approximation.
\newblock {\em J. Algorithms}, 39(2):223--240, 2001.

\bibitem{cgm13}
M.~Cygan, F.~Grandoni, and M.~Mastrolilli.
\newblock How to sell hyperedges: The hypermatching assignment problem.
\newblock In {\em SODA}, page to appear, 2013.

\bibitem{fpt1}
R.~Downey and M.~R. Fellows.
\newblock {\em Parameterized Complexity}.
\newblock Springer-Verlag, 1999.

\bibitem{fellows-ls}
M.~R. Fellows, F.~V. Fomin, D.~Lokshtanov, F.~A. Rosamond, S.~Saurabh, and
  Y.~Villanger.
\newblock Local search: Is brute-force avoidable?
\newblock {\em J. Comput. Syst. Sci.}, 78(3):707--719, 2012.

\bibitem{mc-hardness}
M.~R. Fellows, D.~Hermelin, F.~A. Rosamond, and S.~Vialette.
\newblock On the parameterized complexity of multiple-interval graph problems.
\newblock {\em Theoretical Computer Science}, 410:53--61, 2009.

\bibitem{fpt2}
J.~Flum and M.~Grohe.
\newblock {\em Parameterized Complexity Theory}.
\newblock Springer-Verlag, 2006.

\bibitem{gaspers-ls}
S.~Gaspers, E.~J. Kim, S.~Ordyniak, S.~Saurabh, and S.~Szeider.
\newblock Don't be strict in local search!
\newblock In {\em AAAI}, 2012.

\bibitem{h95}
M.~M. Halld{\'o}rsson.
\newblock Approximating discrete collections via local improvements.
\newblock In {\em SODA}, pages 160--169, 1995.

\bibitem{haastad}
J.~H{\aa}stad.
\newblock Clique is hard to approximate within $n^{1-\epsilon}$.
\newblock {\em Electronic Colloquium on Computational Complexity (ECCC)},
  4(38), 1997.

\bibitem{hazan}
E.~Hazan, S.~Safra, and O.~Schwartz.
\newblock On the complexity of approximating $k$-dimensional matching.
\newblock In {\em RANDOM-APPROX}, pages 83--97, 2003.

\bibitem{hazan2}
E.~Hazan, S.~Safra, and O.~Schwartz.
\newblock On the complexity of approximating $k$-set packing.
\newblock {\em Computational Complexity}, 15(1):20--39, 2006.

\bibitem{hs89}
C.~A.~J. Hurkens and A.~Schrijver.
\newblock On the size of systems of sets every $t$ of which have an {SDR}, with
  an application to the worst-case ratio of heuristics for packing problems.
\newblock {\em SIAM J. Discrete Math.}, 2(1):68--72, 1989.

\bibitem{eth}
R.~Impagliazzo and R.~Paturi.
\newblock On the complexity of k-{SAT}.
\newblock {\em J. Comput. Syst. Sci.}, 62(2):367--375, 2001.

\bibitem{karp21}
R.~M. Karp.
\newblock Reducibility among combinatorial problems.
\newblock In {\em Complexity of Computer Computations}, pages 85--103, 1972.

\bibitem{Kloks94}
T.~Kloks.
\newblock {\em {Treewidth, Computations and Approximations}}, volume 842 of
  {\em Lecture Notes in Computer Science}.
\newblock Springer, 1994.

\bibitem{ls-survey}
D.~Marx.
\newblock Local search.
\newblock {\em Parameterized Complexity Newsletter}, 3:7--8, 2008.

\bibitem{permissive-ls}
D.~Marx and I.~Schlotter.
\newblock Stable assignment with couples: Parameterized complexity and local
  search.
\newblock {\em Discrete Optimization}, 8(1):25--40, 2011.

\bibitem{opt-splitters}
M.~Naor, L.~J. Schulman, and A.~Srinivasan.
\newblock Splitters and near-optimal derandomization.
\newblock In {\em FOCS}, pages 182--191, 1995.

\bibitem{fpt3}
R.~Niedermeier.
\newblock {\em Invitation to Fixed-Parameter Algorithms}.
\newblock Oxford University Press, 2006.

\bibitem{svensson}
L.~Polacek and O.~Svensson.
\newblock Quasi-polynomial local search for restricted max-min fair allocation.
\newblock In {\em ICALP (1)}, pages 726--737, 2012.

\bibitem{sviridenko-ward}
M.~Sviridenko and J.~Ward.
\newblock Large neighborhood local search for the maximum set packing problem.
\newblock In {\em ICALP (1)}, pages 792--803, 2013.

\end{thebibliography}

\end{document}